\theoremstyle{plain} %default
\newtheorem{thm}{Theorem}
\newtheorem{lemma}{Lemma}
\newtheorem{prop}{Proposition}
\theoremstyle{definition}
\DeclareMathOperator*{\argmax}{arg\,max}
\def\bal#1\eal{\begin{align}#1\end{align}}
\newcommand{\bxi} {\boldsymbol{\xi}}
\newcommand{\bW}{{\bf W}}
\newcommand{\bH}{{\bf H}}
\newcommand{\bR}{{\bf R}}
\newcommand{\bK}{{\bf K}}
\newcommand{\bI}{{\bf I}}
\newcommand{\bM}{{\bf M}}
\newcommand{\bN}{{\bf N}}
\newcommand{\bA}{{\bf A}}
\newcommand{\bD}{{\bf D}}
\newcommand{\bQ}{{\bf Q}}
\newcommand{\bZ}{{\bf Z}}
\newcommand{\bT}{{\bf T}}
\newcommand{\bLam}{{\bf \Lambda}}
\newcommand{\by}{{\bf y}}
\newcommand{\bx}{{\bf x}}
\newcommand{\ba}{{\bf a}}
\newcommand{\br}{{\bf r}}
\newcommand{\bz}{{\bf z}}
\newcommand{\bo}{{\bf 0}}
\newcommand{\bk}{{\bf k}}
\newcommand{\bw}{{\bf w}}
\newcommand{\bRequire}{{\bf Require}}
\newcommand{\sS}{\mathcal{S}}
\newcommand{\bp} {\begin{proof}}
\newcommand{\ep} {\end{proof}}
\newcommand{{\Rb}} {\right)}
\newcommand{{\Rf}} {\right\}}
\begin{document}

\title{Algorithms for Globally-Optimal Secure Signaling over Gaussian MIMO Wiretap Channels Under Interference Constraints}

\author{Limeng Dong, Sergey Loyka, Yong Li

\vspace*{-1\baselineskip}

	\thanks{This paper was presented in part at the 5th IEEE Global Conference on Signal and Information Processing, Montreal, Canada, Nov. 2017 \cite{Dong-17b}.} 	
	\thanks{L. Dong is with the Ministry of Education Key Lab for Intelligent Networks  and Network Security, School of Information and Communications Engineering, Xi'an Jiaotong University, Xi'an, Shaanxi, 710049, China, e-mail: dlm\_nwpu@hotmail.com. This work was done when L. Dong was visiting the School of Electrical Engineering and Computer Science, University of Ottawa, Canada.}
	\thanks{S. Loyka is with the School of Electrical Engineering and Computer Science, University of Ottawa, Canada, e-mail: sergey.loyka@uottawa.ca.}
	\thanks{Y. Li is with the School of Electronics and Information, Northwestern Polytechnical University, Xi’an, China, e-mail: ruikel@nwpu.edu.cn.}
    %\thanks{Corresponding authors: S. Loyka and Y. Li.}
}
\maketitle

%\vspace*{-1\baselineskip}
%======================================================================
\begin{abstract}

Multi-user Gaussian MIMO wiretap channel is considered under interference power constraints (IPC), in addition to the total transmit power constraint (TPC). Algorithms for \textit{global} maximization of its secrecy rate are proposed. Their convergence to the secrecy capacity is rigorously proved and a number of properties are established analytically. Unlike known algorithms, the proposed ones are not limited to the MISO case and are proved to converge to a \textit{global} rather than local optimum in the general MIMO case, even when the channel is not degraded. In practice, the convergence is fast as only a small to moderate number of Newton steps is required to achieve a high precision level. The interplay of TPC and IPC is shown to result in an unusual property when an optimal point of the max-min problem does not provide an optimal transmit covariance matrix in some (singular) cases. To address this issue, an  algorithm is developed to compute an optimal transmit covariance matrix in those singular cases. It is shown that this algorithm also solves the dual (nonconvex) problems of \textit{globally} minimizing the total transmit power subject to the secrecy and interference constraints; it provides the minimum transmit power and respective signaling strategy needed to achieve the secrecy capacity, hence allowing power savings.

\end{abstract}

%\vspace*{-1\baselineskip}
%======================================================================
\section{Introduction}

Ever-growing number of wireless users and their traffic, open system architectures and aggressive frequency re-use as well as operation in unlicensed bands envisioned in 5G systems \cite{Shafi-17} create significant potential for inter-user interference, which needs to be carefully controlled and mitigated. Multiple antennas offer a significant potential for doing so in the space domain, especially in the context of massive MIMO \cite{Lu-14}. This approach to interference mitigation and control has been investigated earlier in the context of cognitive radio (CR) \cite{Zhang-08}, where secondary users are allowed to use the same bandwidth as primary users (who are the license holders) but are required to cause no significant interference to them. On the other hand, open system architectures and co-existence of several users in the same bandwidth in combination with the broadcast nature of wireless channels make transmissions vulnerable to eavesdropping of confidential information (e.g. e-commerce and e-health, mobile banking, Internet transactions, etc.) so that some form of secrecy protection is needed. In this context, physical-layer security approach has emerged as a valuable complement to the traditional cryptography-based approach for modern wireless networks \cite{Bloch-11}-\cite{Wu-18}. In this approach, the secrecy of communications is ensured at the physical layer by exploiting the properties of wireless communication channels so that no transmitted information can be recovered by malicious eavesdroppers. Wiretap channel (WTC) is widely used as a model of secrecy communications and its secrecy capacity became the key metric of performance \cite{Bloch-11}-\cite{Oggier-10}.

\subsection{Literature review}

Using this approach in combination with MIMO systems offers significant new opportunities for enhancing the secrecy of multi-user wireless systems via space-domain processing. The MIMO WTC model became a popular tool to study physical-layer security, where the transmitter (Tx) sends confidential information to the receiver (Rx) while an eavesdropper (Ev) observes the transmission. The main performance metric, which is an ultimate upper bound to reliable and secret communications, is the secrecy capacity, defined operationally as the maximum achievable rate on the Tx-Rx link subject to the reliability (low error probability) and secrecy (low information leakage on the Tx-Ev link) criteria \cite{Bloch-11}-\cite{Wu-18}. The secrecy capacity of Gaussian MISO (multiple-input single-output) WTC has been established in \cite{Khisti-10a} and  further extended to the full MIMO case in \cite{Khisti-10b}\cite{Oggier-10}, where the optimality of Gaussian signaling has also been established.

Hence, finding the secrecy capacity amounts to finding an optimal input (transmit) covariance matrix. This problem is still open analytically in the general case since the underlying optimization problem is not convex and hence very hard to solve, either numerically or analytically, while some special cases (MISO, full-rank MIMO, rank-1 MIMO, weak eavesdropper, identical right singular vectors of Rx and Ev channels, etc.) have been solved \cite{Khisti-10a}-\cite{Loyka-17}. The two Tx antennas case was studied in details in \cite{Vaezi-17}, the massive MIMO setting was considered in \cite{Basciftci-18}\cite{Asaad-18}, and finite-alphabet signaling was also studied \cite{Aghdam-19}; an overview of recent results can be found in \cite{Regalia-15}\cite{Wu-18}\cite{Aghdam-19}.

The Gaussian MISO WTC with multiple eavesdroppers was considered in \cite{Zhang-10}, where the original non-convex problem was transformed to a quasi-convex one that can be solved as a sequence of convex feasibility problems using the bisection method. This MISO case with multiple Evs was also studied in \cite{Li-11}, including a deterministic channel uncertainty model, where the original non-convex problem was transformed to a convex semi-definite one using determinantal inequality and the fact that optimal covariance is of rank-1 so that the inequality becomes equality; this new problem can be efficiently solved using existing convex solvers. The MISO channel with multi-eavesdroppers and stochastic channel uncertainty was studied in \cite{Chu-16}. Unfortunately, the multi-Ev studies above did not establish the optimality of Gaussian signaling (but rather assumed it) and they cannot be extended to the full MIMO case since there exists no equivalent scalar channel anymore and optimal covariance is not necessarily rank-1. New approaches are needed. In this respect, the multi-Ev case, where eavesdroppers are not cooperative, is equivalent to a compound WTC whose operational capacity was established in \cite{Liang-09}\cite{Schaefer-15} by demonstrating that Gaussian signaling is optimal. However, an optimal Tx covariance matrix is not known in the general case either.

Since there is no closed-form solution in the general case, even for a single eavesdropper, a number of numerical algorithms have been developed to maximize secrecy rates \cite{Li-13}-\cite{Cumanan-14}. As the original problem is non-convex, these algorithms use some form of convexification, where the non-convex part of the objective (the Ev part) is expanded in a Taylor series and only first two terms are kept (i.e. the non-convex part is linearized, either explicitly or implicitly) \cite{Li-13}-\cite{Cumanan-14}. Then, the approximated but convex problem is solved, an expansion point is iteratively updated and the process is repeated. The fundamental difficulty with this approach is that, even if the algorithm can be proved to converge, a convergence point is just a Karush-Kuhn-Tucker (KKT) point, but, due to the non-convex nature of the original (not approximate) problem, the KKT conditions are not sufficient for \textit{global} optimality. Hence, a convergence point of these algorithms can be a local rather than global maximum, an inflection point, a local or even global minimum \cite{Murty-87}-\cite{Nesterov-04}. All these algorithms lack provable convergence to a \textit{global} optimum due to the non-convex nature of the original problem and no way is known to overcome this fundamental difficulty. Furthermore, the gap to a global optimum is not known either.

Using a different approach, an algorithm with provable convergence to a \textit{global} optimum in the general MIMO case (with single eavesdropper) was proposed in \cite{Loyka-15}. The key idea was to  avoid any form of convexification/approximation or alternating optimization (for which proving convergence to a global optimum is out of reach), but rather to use the max-min formulation in \cite{Khisti-10b}\cite{Oggier-10}, without any approximations.
However, this algorithm cannot be used in interference-constrained environments (e.g. CR) due to three fundamental issues: (i) while the feasible set is isotropic under the Tx power constraint (TPC) alone (no limits on eigenvectors, only on the sum of eigenvalues of the Tx covariance matrix), it is not isotropic anymore when interference power constraints (IPC) are added and this has a dramatic impact on the KKT conditions and numerical algorithms used to solve them; (ii) any of the constraints, including TPC, can be inactive under IPCs while the TPC is \textit{always} active without IPC; furthermore, it is not known in advance which constraint is active and which is not so that an algorithm is required to determine this automatically; finally, (iii) a global convergence proof must include the interference  constraints and the fact that some of them may be inactive.

An interference-constrained Gaussian MISO WTC with a single Ev was studied in \cite{Pei-10}. It was shown that Gaussian signalling is optimal and the operational secrecy capacity can be expressed as a quasi-convex optimization problem, which can be subsequently reduced to a sequence of convex feasibility problems \cite{Pei-10} and they can be further solved using existing convex solvers. An imperfect channel state information (CSI) was accounted for in \cite{Pei-11}. Secrecy rate maximization of interference-constrained MISO (single-antenna Rx) WTC under single or multiple non-cooperative Evs and various channel assumptions (fixed, quasi-static or ergodic fading with full or partial channel state information) was studied in \cite{Wang-14}-\cite{Zhang-18}; artificial noise and various beamforming solutions were proposed to maximize the secrecy rate.

However, it is not known whether these solutions are optimal, i.e. achieve the secrecy capacity, and what is the actual gap to the capacity. In addition, all these studies are limited to the MISO case, i.e. single-antenna receivers, and cannot be extended to the full MIMO case due to the fundamental limitations of the approach they use, i.e. transforming a MISO channel into an equivalent scalar channel and reducing (or relaxing) the original non-convex problem to a convex or quasi-convex one. In the full MIMO case, there is no equivalent scalar channel, beamforming is not an optimal strategy in general, the original problem is not convex and it is not known how to transform it into an equivalent convex or quasi-convex problem.

%\vspace*{-1\baselineskip}
\subsection{Contributions}

Thus, a new approach is needed to deal with the full MIMO WTC under interference constraints. Unlike the previous studies in \cite{Pei-10}-\cite{Zhang-18}, in this paper we target capacity-achieving signaling over the full MIMO WTC under interference constraints and to this end develop an algorithm with provable convergence to a global optimum.

The proposed  algorithm is based on the max-min secrecy capacity characterization originally developed in \cite{Khisti-10b}\cite{Oggier-10} under the TPC alone and later extended to the joint constraints (TPC+IPC) in \cite{Dong-18}, where the optimality of Gaussian signaling was established in the general MIMO case under the joint constraints and the max and max-min secrecy capacity characterizations of \cite{Khisti-10b}\cite{Oggier-10} were shown to hold as well (even though the feasible set under the joint constraints is not isotropic).

However, no analytical solutions to either the max or max-min problems above are known in the general case (some special cases have been solved in \cite{Dong-18}\cite{Loyka-18}, but the general case remains an open problem). No algorithmic solution with provable convergence to global optimum under the joint constraints is known either. Therefore, a numerical algorithm is needed to solve the problems and thus to find an optimal Tx covariance matrix and the secrecy capacity. Such algorithm is proposed in the present paper. The distinct features of this new algorithm are that (i) it finds \textit{globally}-optimum (i.e. capacity-achieving) transmit covariance matrix, (ii) its convergence to a \textit{global} (rather than local) optimum is rigorously proved, and (iii) it is of polynomial complexity.

It should be emphasized that the standard max-only formulation of the secrecy rate maximization problem, which is dominant in the current literature, see e.g. \cite{Li-13}-\cite{Cumanan-14}, does not allow one to build an algorithm with \textit{guaranteed} convergence to a \textit{global} optimum in the general case due to the lack of problem's convexity (which makes provable global convergence out of reach, see e.g. \cite{Boyd-04}\cite{Murty-87}-\cite{Nesterov-04}). Algorithms based on the max-only formulation face a fundamental difficulty since they may get trapped in a local optimum and hence their performance may be rather poor. For example, we show in Fig. \ref{fig.R*=not.R'} that the Taylor expansion-based sub-optimal algorithm as in \cite{Cumanan-14} does get trapped at local optima (or stationary points), far away from the global one, resulting in poor performance and hence should be used with caution (or avoided at all) when the original problem is not convex. In general, non-convex problems are NP-hard (of exponential complexity) and the best one can hope for is convergence to a stationary point, which can be a local (rather than global) maximum, an inflection point or even a local minimum \cite{Murty-87}-\cite{Nesterov-04}. A convergence point may also depend on initial (starting) point, so that some bad initial points may result in bad results (e.g. a local minimum rather than maximum). The only known exception to this is the MISO case, where problem re-formulation is possible to a quasi-convex or some other tractable form, but this re-formulation is not possible in the general MIMO case (since there exists no equivalent scalar channel). On the other hand, the max-min characterization of \cite{Khisti-10b}\cite{Oggier-10}\cite{Dong-18}, while appearing to be more complicated due to two conflicting optimizations, is in fact more tractable due to its convex-concave nature.

In this paper, we use the max-min characterization to construct an  algorithm with \textit{provable} convergence to a \textit{global} optimum in the general MIMO case. This algorithm includes three key components: (i) the residual-form Newton method, (ii) the barrier method and (iii) backtracking line search. The barrier method is needed to absorb inequality constraints into the objective function, while the residual-form Newton method, in combinations with backtracking line search, generates a sequence of points which converge to a globally-optimal max-min point for which the objective value is the secrecy capacity. When combined properly, they are proved  to converge to a globally-optimal solution with any desired accuracy. In practice, only a small to moderate number of Newton steps is needed to achieve a high precision level.

While the algorithm above computes the secrecy capacity via a saddle-point of the max-min problem, its optimal covariance matrix is not necessarily a maximizer of the secrecy rate and hence cannot be used for globally-optimal (i.e. capacity-achieving) signaling in the general case. This unusual effect is entirely due to the interplay between TPC and IPC and cannot be found under the TPC alone, as in \cite{Khisti-10b}\cite{Oggier-10}\cite{Loyka-15}. To address this issue, we establish general properties of the secrecy capacity as a function of Tx power and, based on it, develop an iterative bisection algorithm in Section \ref{sec.algorithm2} (Algorithm 2), which evaluates numerically an optimal covariance in the general case with any desired accuracy and prove its convergence. Numerical experiments show that the proposed algorithms converge fast in practice and achieve higher secrecy rates (significantly higher when the channel is not degraded and its negative eigenmode is dominant) than the known sub-optimal algorithms.

Motivated by energy efficiency issues, dual problems of minimizing globally the total transmit power subject to secrecy and interference power constraints are considered in Section \ref{sec.dual}. Since these problems are not convex, standard tools of convex optimization do not apply and they are difficult to solve (where "solve" means finding global rather than local optimum). Yet, Proposition \ref{prop.dual} shows that  Algorithm 2 solves these problems as well. This provides the globally-minimum Tx power and respective signaling strategy needed to achieve a target secrecy rate under interference constraints.

Collectively, the two proposed algorithms evaluate the secrecy capacity and globally-optimal signaling strategy to achieve it in the interference-constrained multi-user Gaussian MIMO wiretap channel in the general case. This is substantially different from the known algorithms in \cite{Li-13}-\cite{Cumanan-14}\cite{Pei-10}\cite{Pei-11}, which either operate over a MISO channel only or which converge to a stationary (KKT) point only, which can be a local rather than global maximum, an inflection point, a local or even global minimum, and for which a proof of convergence to a \textit{global} optimum is out of reach.

The rest of the paper is organized as follows. Section \ref{sec.Ch.Model} introduces the channel model and gives its operational secrecy capacity; the model is general enough to include per-antenna power constraints as well. The algorithm for global maximization of secrecy rates over interference-constrained multi-user Gaussian MIMO wiretap channel is developed in Section \ref{sec.algorithm} and its convergence is rigorously proved. Based on this algorithm, Section \ref{sec.algorithm2} presents a bisection-based algorithm to evaluate numerically an optimal Tx covariance with any desired accuracy in the general case and its convergence is proved. Dual problems of minimizing the total transmit power subject to secrecy and interference power constraints are considered in Section \ref{sec.dual} and Algorithm 2 is shown to solve these problems as well.  Finally, numerical experiments to illustrate algorithms' performance and practical convergence are given in Section \ref{sec.examples}.

\emph{Notations}: bold lower-case letters ($\ba$) and capitals ($\bA$) denote vectors and matrices respectively; $\bA \ge \bo$ denotes positive semi-definite matrix $\bA$; $\bA^T$  is transposition while $\bA^+$ is Hermitian conjugation; $tr(\bA)$ is the trace; $vec(\bA)$ is the vector obtained by stacking all columns of matrix $\bA$ on top of each other and $veh(\bA)$ is the vector obtained by vectorizing only the lower triangular part of $\bA$; $\emph{diag}(\bA)$ is a diagonal matrix with the same diagonal entries as in $\bA$; $E\left \{ \cdot  \right \}$ is a statistical expectation; $\otimes$ is the Kronecker product;  $|\ba |$ and $|\bA|$ are the Euclidian norm of vector $\ba$ and determinant of matrix $\bA$; $\bI$ is the identity matrix of appropriate size.

%\newpage
%======================================================================
\section{Channel Model and Secrecy Capacity}
\label{sec.Ch.Model}
Let us consider the standard Gaussian MIMO WTC model as shown in Fig. \ref{fig.1},  where the transmitter (Tx) sends confidential information to the receiver (Rx) while $N$ eavesdroppers (Ev), who may be just other users in a multi-user system, intercept the transmission; the Evs are assumed to be cooperative, which is the most conservative assumption in terms of secrecy\footnote{This cooperation is possible in e.g. cloud radio access networks (C-RAN), where users' baseband data is centrally stored and processed \cite{Wu-15}-\cite{Peng-16} and hence a malicious user (super-Ev) can exploit it for eavesdropping.}.  The objective is to ensure reliable communications between the Tx and Rx (the reliability criterion) while keeping the Evs ignorant about transmitted information (the secrecy criterion). In an interference-constrained (IC) multi-user environment, such as cognitive radio, the interference generated by the Tx to primary receivers (PR), who represent licensed users of the system, must not exceed certain thresholds.  The secrecy capacity is defined operationally as the largest transmission rate on the Tx-Rx link subject to the reliability and secrecy criteria \cite{Bloch-11}-\cite{Oggier-10}, where the reliability criterion ensures arbitrary low error probability at the Rx while recovering the transmitted message; the secrecy criterion ensures arbitrary low information leakage to the Evs. The Tx has $m$ antennas, while the Rx and each Ev have $n_1$ and $n_{2i}$ ($i=1,2,...,N$) antennas, respectively. In the discrete-time AWGN MIMO channel model, the signals received by the Rx and each Ev can be expressed as
\bal
\label{eq.ch}
\by_1 = \bH_1\bx+\bxi_1, \quad \by_{2i} = \bH_{2i}\bx+\bxi_{2i}
\eal
where $\by_{1(2i)}$ are the respective received signals at Rx ($i$-th Ev), $\bx$ is the transmitted signal, $\bxi_{1(2i)}$ represent zero-mean unit-variance i.i.d. noise at  the Rx ($i$-th Ev) end; $\bH_{1(2i)}$ are the channel matrices collecting channel gains from the Tx to the Rx ($i$-th Ev).  In addition to this and following the interference-constrained model, there are $K$ PRs equipped with $n_{3j}$ ($j=1,2,...,K$) antennas each. The received signal at $j$-th PR  is similarly expressed as
\bal
\label{eq.ch.PR}
\by_{3j} = \bH_{3j}\bx+\bxi_{3j}
\eal
where $\bH_{3j}$ and $\bxi_{3j}$ are the channel matrix and zero-mean unit-variance i.i.d. noise. For future use, let $\bW_k=\bH_k^{+}\bH_k, k=1,2$ and let $\bW_{3j}=\bH_{3j}^{+}\bH_{3j}, j=1,2,...,K$. We assume that the full channel state information (CSI) is available to the Tx, Rx and each Ev (which is motivated by modern adaptive system design, where channel is estimated at the Rx and send back to the Tx via a feedback link; when Evs are just other users in the system, they also share their CSI with the base station).

\begin{figure}[t]%[htbp]
	\centerline{\includegraphics[width=3.5in]{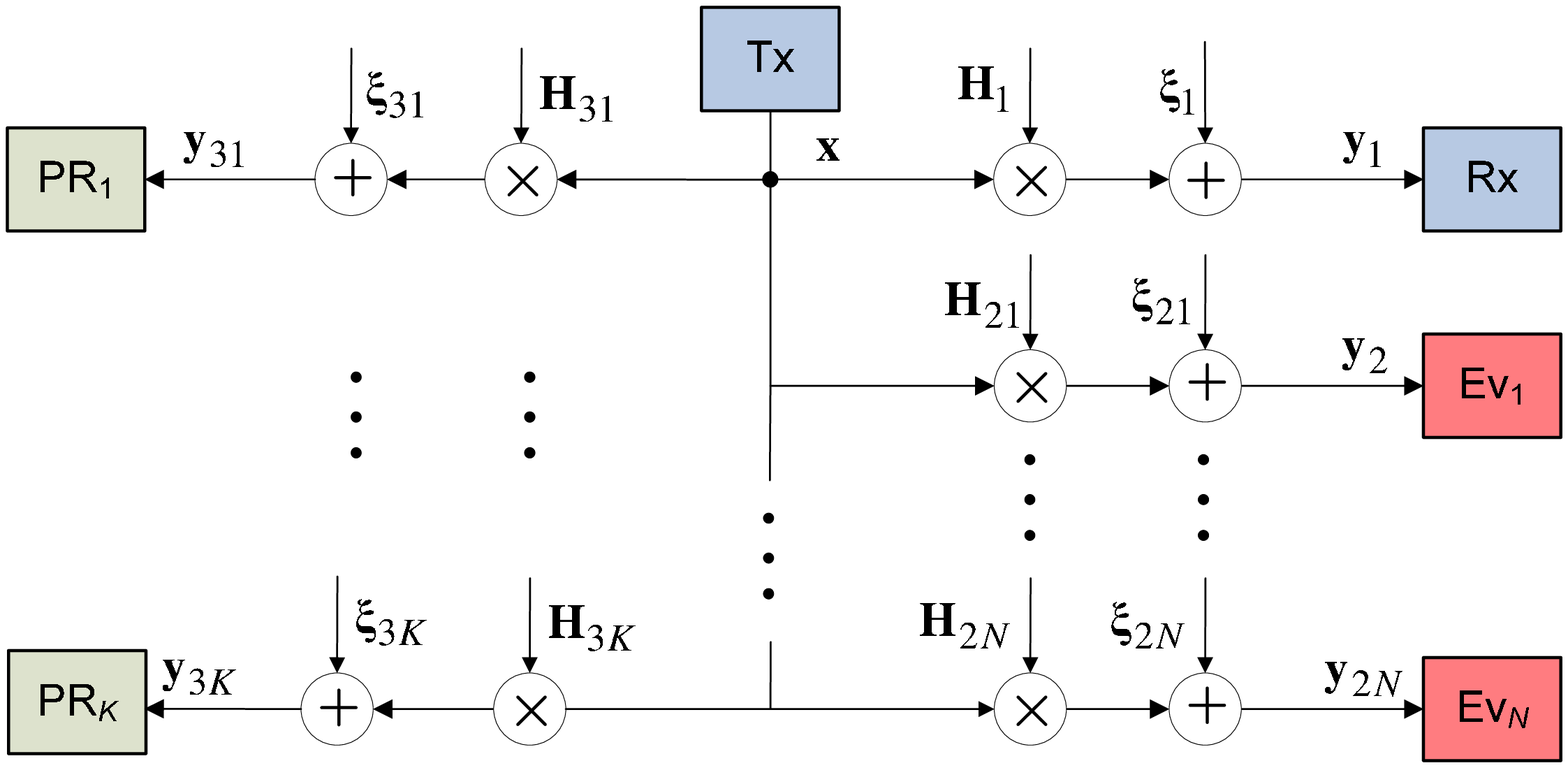}}
	\caption{A block diagram of the Gaussian multi-user MIMO wiretap channel under interference constraints. $\bH_1$, $\bH_{2i}$ and $\bH_{3j}$ are the channel matrices to the Rx, each Ev and PR respectively; $\bx$ is the Tx signal; $\by_1$, $\by_{2i}$ and $\by_{3j}$ are the received signal at the Rx, each Ev and PR respectively; $\bxi_1$, $\bxi_{2i}$ and $\bxi_{3j}$ are respective noise components.}
	\label{fig.1}
\end{figure}

Overall, the transmission is subject to the TPC and multiple IPCs, so that any Tx covariance matrix $\bR=E \left \{\bx\bx^{+}\right \}$ must be in the following feasible set $S_{\bR}$:
\bal
\label{eq.SR}
S_{\bR}= \{\bR\geq  \bo: tr(\bR)\leq  P_T, tr(\bW_{3j}\bR)\leq P_{Ij}\ \forall j \}
\eal
where $P_T,\ P_{Ij}$ are the maximum allowed transmit and interference powers at the Tx and each PR respectively, termed here the TPC and IPC powers. The IPC
\bal
tr(\bW_{3j}\bR)=tr(\bH_{3j}\bR\bH_{3j}^{+})\leq P_{Ij},\  j=1,2..K,
\eal
ensures that the total interference power at the $j$th PR does not exceed the IPC power $P_{Ij}$ so that this PR's performance is not distorted. This type of interference constraints has been widely adopted in the literature for regular systems (no secrecy) \cite{Zhang-08}\cite{Yang-13}-\cite{Zhang-12} as well as for secrecy systems \cite{Zhang-10}\cite{Li-13}\cite{Pei-10}-\cite{Zhang-18}. In this multi-user environment, the secrecy capacity of the interference-constrained WTC is defined operationally as the largest achievable rate on the Tx-Rx link subject to the secrecy, reliability, transmit and interference  power constraints simultaneously. Note that per-antenna power constraints (as in e.g. \cite{Vu-11}\cite{Loyka-17b}), in addition to or instead of the TPC, can also be accommodated by setting some $\bW_{3j}$ to be diagonal matrices with 0-1 entries.

%\vspace*{-.5\baselineskip}
%========================================================================
\subsection{Secrecy Capacity of Interference-Constrained MIMO WTC}
\label{sec.Sec.Cap}
Since the Evs are cooperative, their received signals can be aggregated into a single vector resulting in a single meta-Ev as follows:
\bal
\label{eq.ch.y2}
\by_{2} = \bH_{2}\bx+\bxi_{2}
\eal
where $\by_{2}$, $\bH_{2}$ are aggregated received signals and channel matrices, respectively, $\by_{2}=[\by_{21}^{+}, \by_{22}^{+},..., \by_{2N}^{+}]^{+}$, $\bH_{2}=[\bH_{21}^{+}, \bH_{22}^{+},..., \bH_{2N}^{+}]^{+}$, $\bxi_{2}=[\bxi_{21}^{+}, \bxi_{22}^{+},..., \bxi_{2N}^{+}]^{+}$.
While the MISO case was considered in \cite{Pei-10}, its approach cannot be extended to the full MIMO case since an optimal covariance is not necessarily of rank-1 and there is no equivalent scalar channel allowing quasi-convex reformulation of the original non-convex problem.

A different approach was adopted in \cite{Dong-18}: it is based on the max-min characterization of the secrecy capacity originally developed in \cite{Khisti-10b}\cite{Oggier-10} under the TPC alone and its further extension to the case of the joint constraints (TPC+IPC). This established the operational secrecy capacity of the interference-constrained MIMO WTC above as follows (Gaussian signaling is still optimal in this setting).
\begin{thm}
	\label{thm.C}
	The  operational secrecy capacity of interference-constrained Gaussian MIMO WTC in \eqref{eq.ch}, \eqref{eq.ch.y2} and \eqref{eq.ch.PR} under the TPC and the IPCs in \eqref{eq.SR} can be expressed as
	\bal
	\label{eq.secrecy}
	C=\underset{\bR\in S_{\bR}}{\max}C(\bR)=\underset{\bR\in S_{\bR}}{\max}\ \underset{\bK\in S_{\bK}}{\min}f(\bR,\bK) \qquad \mathrm{(P1)}
	\eal
	where
	\bal
\label{eq.C(R)}
	&C(\bR)=\ln|\bI +\bW_1\bR|-\ln|\bI+\bW_2\bR|,
	\\
	&f(\bR,\bK)=
	\ln|\bI+\bK^{-1}\bH\bR\bH^{+}|-\ln|\bI+\bW_2\bR|,
	\eal
 and $\bH=[\bH_{1}^{+},\bH_{2}^{+}]^{+}$, $S_{\bK}$ is a set of noise covariance matrices of the form
	\bal
	\label{eq.set.K}
		&S_{\bK}=\left \{\bK:\bK=\begin{bmatrix}
		\bI & \bN \\
		\bN^{+} & \bI
	\end{bmatrix},\bK\geq \bo \right \},
	\eal
where $\bN =E\{\bxi_{1}\bxi_{2}^{+}\}$ is noise cross-covariance.
\end{thm}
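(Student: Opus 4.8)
The plan is to establish the chain $C=\max_{\bR\in S_\bR}C(\bR)=\max_{\bR\in S_\bR}\min_{\bK\in S_\bK}f(\bR,\bK)$ by proving two things separately: that the operational secrecy capacity equals the single maximization $\max_\bR C(\bR)$ (i.e. Gaussian signaling is optimal), and that this maximization coincides with the max--min. I would first record the structural fact that drives everything: since $\bN=E\{\bxi_1\bxi_2^+\}$ affects only the joint law of $(\by_1,\by_2)$ and not the marginals, both $I(\bx;\by_1)=\ln|\bI+\bW_1\bR|$ and $I(\bx;\by_2)=\ln|\bI+\bW_2\bR|$ are invariant over $S_\bK$, while $\ln|\bI+\bK^{-1}\bH\bR\bH^+|$ is exactly the mutual information $I_\bK(\bx;\by_1,\by_2)$ of the combined channel whose noise covariance is $\bK$. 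Thus $f(\bR,\bK)=I_\bK(\bx;\by_1,\by_2)-I(\bx;\by_2)$, and the inner minimization searches for the noise correlation that makes the combined observation least informative.

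Two of the needed inequalities are then straightforward. Achievability, $C\ge\max_\bR C(\bR)$, follows from Gaussian wiretap coding: for any $\bR\in S_\bR$ the rate $C(\bR)=I(\bx;\by_1)-I(\bx;\by_2)$ is achievable, and this argument is insensitive to the shape of $S_\bR$. For the inner problem I would use monotonicity of mutual information: adjoining $\by_2$ cannot decrease $I(\bx;\by_1)$, so $I_\bK(\bx;\by_1,\by_2)\ge I(\bx;\by_1)$ for every $\bK\in S_\bK$, whence $f(\bR,\bK)\ge C(\bR)$ and $\min_\bK f(\bR,\bK)\ge C(\bR)$; taking the maximum gives $\max_\bR\min_\bK f\ge\max_\bR C(\bR)$. (The inner minimum is attained on the boundary of $S_\bK$, where $\bK$ becomes singular and $\by_2$ reduces to a degraded version of $\by_1$; only this inequality is needed below.)

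The substance of the proof, and the part I expect to be hardest, is the matching converse together with the swap of $\max$ and $\min$. Here I would follow the channel-enhancement approach: for a fixed $\bK\in S_\bK$, giving the legitimate receiver the enhanced (correlated) observation $(\by_1,\by_2)$ can only raise the secrecy capacity, and when $\bK$ is chosen so that the enhanced channel is \emph{degraded} with respect to the eavesdropper, its secrecy capacity is exactly $\max_\bR f(\bR,\bK)$ with Gaussian input optimal; minimizing over $\bK$ then yields $C\le\min_\bK\max_\bR f(\bR,\bK)$. It remains to show $\min_\bK\max_\bR f=\max_\bR\min_\bK f$ and that, at a saddle point $(\bR^*,\bK^*)$, the minimizing $\bK^*$ degrades the channel at $\bR^*$, so that $f(\bR^*,\bK^*)=C(\bR^*)\le\max_\bR C(\bR)$; combined with the achievability and lower-bound inequalities, this forces all the quantities to coincide. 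The genuine obstacles are (i) establishing existence of the saddle point and the minimax identity despite $f$ not being jointly concave--convex in an obvious way away from the degraded regime, which I would handle by exploiting that the minimizing $\bK$ drives the problem into the degraded region where concavity in $\bR$ is restored; (ii) the degraded Gaussian MIMO WTC converse itself, which requires an entropy-power / channel-enhancement argument; and (iii) verifying that each of these steps survives the non-isotropic feasible set $S_\bR$ induced by the joint TPC and IPCs, rather than the isotropic TPC-only set of \cite{Khisti-10b}\cite{Oggier-10} --- this last point is exactly the extension carried out in \cite{Dong-18}, and it is where one must check that the enhancement argument and the optimality of Gaussian signaling remain valid once the extra constraints $tr(\bW_{3j}\bR)\le P_{Ij}$ are imposed.
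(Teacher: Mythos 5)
The paper itself contains no proof of Theorem~\ref{thm.C}: it is imported from \cite{Dong-18}, whose proof extends the Khisti--Wornell max--min machinery of \cite{Khisti-10b}\cite{Oggier-10} to the joint TPC+IPC constraints --- i.e., the same general route you sketch. Your achievability step, the bound $\min_{\bK} f(\bR,\bK)\ge C(\bR)$ via $I(\bx;\by_1,\by_2)\ge I(\bx;\by_1)$, and the genie-aided converse are all sound. Moreover, your worry (i) is unnecessary: since $\by_2$ is a sub-vector of the enhanced observation $(\by_1,\by_2)$, the channel $\bx\to(\by_1,\by_2)\to\by_2$ is physically degraded for \emph{every} $\bK\in S_{\bK}$, not only near a minimizer, so $f(\bR,\bK)$ is concave in $\bR$ for each fixed $\bK$ and convex in $\bK$ for each fixed $\bR$; the swap $\min_{\bK}\max_{\bR}f=\max_{\bR}\min_{\bK}f$ then follows from the Von Neumann minimax theorem on the compact convex sets $S_{\bR}$, $S_{\bK}$, exactly as this paper invokes it in the proof of Proposition~\ref{prop.gap.C}. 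These steps are also insensitive to the non-isotropy of $S_{\bR}$, since they use only its convexity and compactness.

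The genuine gap is your closing step: the claim that at a saddle point $(\bR',\bK')$ the minimizing $\bK'$ degrades the channel at $\bR'$ so that $f(\bR',\bK')=C(\bR')$. Under the joint TPC+IPC this is false in general, and this very paper is largely about that failure: for the channel in \eqref{eq.ED3}, the pair in \eqref{eq.Ex.K'R'} is a saddle point for any $0\le a\le (P_T-P_I)_+$, yet \eqref{eq.Ex.C(R')<C(R*)} shows $C(\bR')<f(\bR',\bK')=C$ when $P_T>P_I$. The TPC-only alignment argument of \cite{Khisti-10b} breaks because the stationarity condition now carries the extra terms $\mu_j\bW_{3j}$, and when the TPC is inactive and the sum of the active-IPC matrices $\bW_{3j}$ is singular, a saddle-point covariance need not maximize $C(\bR)$. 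So your chain closes only in the nonsingular case. To finish the theorem in general one must either show that \emph{some} saddle point satisfies $f(\bR',\bK')=C(\bR')$ (true in the example at $a=0$, but this requires proof), or argue via the saturation behavior in $P_T$: for $P_T<P_0$ the TPC multiplier is strictly positive (Proposition~\ref{prop.muPT}) and the alignment argument goes through, while for $P_T\ge P_0$ both $\max_{\bR}C(\bR)$ and the max--min value coincide with their common limit as $P\uparrow P_0$ by the concavity and continuity of $C(P_T)$ (Proposition~\ref{prop.Properties}). That missing analysis is precisely the content of the extension in \cite{Dong-18} which your proposal defers to the citation.
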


We emphasize that Theorem 1 characterizes the \textit{operational} secrecy capacity (the largest achievable secrecy rate) rather than an information capacity defined formally as the difference of two mutual information terms, as sometimes done in the literature (without proving its operational significance). The case of multiple \textit{non-cooperating} Evs corresponds to a compound WTC (see e.g. \cite{Liang-09}\cite{Schaefer-15}) and is much more difficult for analysis; its secrecy capacity is not known under interference constraints in general (it is not even known whether Gaussian signaling is optimal). The capacity above is a lower bound to that of the non-cooperative case, since Evs cooperation, while having no effect on the Rx error probability, cannot decrease the information leakage and hence cannot increase the secrecy capacity. However, if there exists a dominant Ev, as in \cite[Proposition 8]{Dong-18}, then Theorem 1 still holds, with $\bW_2$ being the channel Gram matrix of the dominant Ev.

Theorem 1 provides two equivalent characterizations of the secrecy capacity: as a max problem or as a max-min problem. While the first characterization appears to be easier for numerical optimization and is indeed widely-used in the existing literature \cite{Li-13}-\cite{Cumanan-14}, it makes it virtually impossible to prove convergence to a global optimum since the max problem is not convex (since $C(\bR)$ is not concave, unless the channel is degraded, see e.g. \cite{Khisti-10b}), and hence its KKT conditions are not sufficient for global optimality. Provable convergence to a global optimum in this case is out of reach \cite{Boyd-04}\cite{Murty-87}-\cite{Nesterov-04}.

Here, we adopt a different approach based on the max-min representation of the secrecy capacity in \eqref{eq.secrecy}, denoted below as (P1). While this representation involves 2 conflicting optimizations, it is actually easier for numerical optimization, since both optimizations are convex as  $f(\bR,\bK)$ is concave in $\bR$ for any fixed $\bK$ and is convex in $\bK$ for any fixed $\bR$ (see \cite{Khisti-10b}\cite{Loyka-16b}\cite{Dong-18} for further details) and, hence, the respective KKT conditions for both optimizations are jointly sufficient for \textit{\textit{global}} optimality. This opens up a path to develop an iterative algorithm, based on this representation, with a probable convergence to a global (rather than local) optimum.

We caution the reader that while the optimal values of the max and max-min problems in \eqref{eq.secrecy} are the same, the respective optimal covariances are not necessarily the same, i.e. $\bR^* \neq \bR'$ and $C=C(\bR^*) > C(\bR')$ in those cases (see \eqref{eq.ED3}-\eqref{eq.Ex.C(R')<C(R*)} for an example), where $\bR^*$ and $(\bR',\bK')$ are optimal points of the max and max-min problem respectively,
\bal\notag
%\label{eq.secrecy}
\bR^* &=\arg\underset{\bR\in S_{\bR}}{\max}C(\bR),\\
(\bR',\bK') &=\arg\underset{\bR\in S_{\bR}}{\max}\ \underset{\bK\in S_{\bK}}{\min}f(\bR,\bK)
\eal
In some (singular) cases, the difference can be significant (see Fig. \ref{fig.R*=not.R'} and \ref{fig.R*.A2.rand}). This phenomenon never appears without IPCs (under the TPC alone), as in \cite{Khisti-10b}\cite{Oggier-10}\cite{Loyka-15}, where both problems always share the same optimal covariance matrix, $\bR^* = \bR'$. To address this issue,  Algorithm 2 is developed in Section \ref{sec.algorithm2}, which computes iteratively an optimal covariance matrix $\bR^*$ in these singular cases. Its convergence is also proved. Finally, it should also be noted that $\bR^*$ is not necessarily unique (see e.g. Example 2 in \cite{Loyka-18}), which motivates the power minimization problem (P4) in \eqref{eq.P4}.

%\vspace*{-1\baselineskip}
%\newpage
%======================================================================
\section{Capacity-Achieving Signaling Under Interference Constraints}
\label{sec.algorithm}
In this section, we propose an iterative algorithm to solve (P1) numerically and prove its convergence to a global optimum. Performing separately $\max$ and $\min$ optimizations in the max-min part of \eqref{eq.secrecy} immediately faces a serious and fundamental difficulty of achieving or proving convergence of the algorithm due to its oscillatory behaviour, which is due to conflicting (max-min) optimization operations. To overcome this difficulty, we use the residual form of Newton method where both optimizations (max and min) are done simultaneously, so that the residual of the KKT conditions is reduced at each iteration and it converges monotonically to zero as the algorithm progresses (see e.g. \cite{Boyd-04} for more details on this general approach). This opens up a way to a provable convergence to a global optimum, which is out of reach for the max problem in \eqref{eq.secrecy} due to its non-convex nature.

We develop below an iterative algorithm, which is able to handle any number of interference constraints and which does not require advance knowledge of which constraint is active and which is not.  This algorithm is based on the max-min representation of the secrecy capacity in \eqref{eq.secrecy} and includes the barrier method, the residual-form Newton method and the backtracking line search, see e.g. \cite{Boyd-04} for more details on these algorithms. Unlike generic convex optimization algorithms or solvers, our algorithm here is specifically tailored for secrecy rate maximization in multi-user Gaussian MIMO WTC under interference constraints. Its convergence to a global optimum is rigorously proved, even when the WTC is not degraded and hence the max problem in \eqref{eq.secrecy} is not convex. This is a distinct advantage not found in other known algorithms, e.g. in \cite{Li-13}-\cite{Cumanan-14}, where either no convergence at all is proved or where only convergence to a stationary point is proved, which is not necessarily a global maximum, as discussed above.

The key idea of the barrier method is to substitute the original objective function $f(\bR,\bK)$ by a modified one $f_{t}(\bR,\bK)$, which includes additional barrier terms as follows:
\bal
\label{eq.barrier.ft}
\notag
f_{t}(\bR,\bK)=f(\bR,\bK) &+I_{1}(\bR)+I_{2}(\bR)\\
&+\sum_{j} I_{3j}(\bR)-I_{4}(\bK)
\eal
where $t>0$ is the barrier parameter and
\bal
\label{eq.barrier.I1}
&I_{1}(\bR)=t^{-1}\ln|\bR|,\\
\label{eq.barrier.I2}
 &I_{2}(\bR)=t^{-1}\ln(P_{T}-tr(\bR)),\\
 \label{eq.barrier.I3}
&I_{3j}(\bR)=t^{-1}\ln(P_{Ij}-tr(\bW_{3j}\bR)),\\
\label{eq.barrier.I4}
 &I_{4}(\bK)=t^{-1}\ln|\bK|.
\eal
so that all inequality constraints are absorbed in the respective barrier terms $I_1 - I_4$. Note that the domain of $f_{t}(\bR,\bK)$ is  $\bR \in S'_{\bR}, \bK \in S'_{\bK}$ where
\bal\notag
&S'_{\bR}= \{\bR\in S_{\bR}: \bR> \bo,\  tr(\bR) < P_T,\ tr(\bW_{3j}) < P_{Ij}\},\\
&S'_{\bK}= \{\bK\in S_{\bK}: \bK> \bo \},
\eal
i.e. $\bR, \ \bK$ are strictly inside of the original feasible sets $S_{\bR},\ S_{\bK}$ (but may approach the boundary arbitrary closely - this is a key feature of the barrier method). Note also that $f_{t}(\bR,\bK)$ is convex-concave in the right way, i.e. concave in $\bR$ for any fixed $\bK$ and convex in $\bK$ for any fixed $\bR$, so that the respective optimization problems are convex and their KKT conditions are jointly sufficient for global optimality.

In the proposed algorithm, we use the residual-form Newton method to compute an  optimal point $\{\bR(t),\bK(t)\}$ of (P2) below for a fixed $t$ in an iterative way and with high accuracy. To facilitate implementation, we use real rather than complex variables. To reduce the number of variables and improve the efficiency, we exploit the symmetry of $\bR$ and $\bK$ and use $\bx=veh(\bR)$ and $\by=vec(\bN)$ as independent variables to represent $\bR$ and $\bK$, where $vec(\bN)$ operator stacks all columns of $\bN$ on top of each other and $veh(\bR)$ does so for the lower-triangular part of $\bR$. Since $vec(\bN)$ is used as independent variables to represent $\bK$, the equality constraint in \eqref{eq.set.K} is satisfied automatically. The original max-min problem (P1) in \eqref{eq.secrecy} is transformed into the following unconstraint problem:
\bal
\label{eq.barrier}
\textrm{(P2)}\ \ \underset{\bx}{\max}\ \underset{\by}{\min}\ f_{t}(\bR,\bK)
\eal
so that its KKT conditions are simply the stationarity conditions:
\bal
\label{eq.norm}
\br(\bz)=\nabla_{\bz}f_t=\bo
\eal
where
\bal
\label{eq.r(z)}
\bz=\begin{bmatrix}
	\bx	\\ \by
	\end{bmatrix},\
 \br(\bz)=\begin{bmatrix}
  \nabla_{\bx}f_t	\\ \nabla_{\by}f_t
 \end{bmatrix}
\eal
are the aggregate vector of the variables and the residuals respectively. In the residual-form Newton method, the optimality condition $\br(\bz)= 0$ is iteratively solved using 1st-order  approximation of $\br(\bz)$ at each step (which corresponds to the second-order approximation of the objective):
 \bal
 \label{eq.norm.taylor}
 \br(\bz_k+\Delta \bz)=\br(\bz_k)+D\br \Delta \bz+o(\Delta \bz)=0.
 \eal
where $\bz_k$ and $\Delta \bz$ are the current variables and their updates respectively at iteration $k$, and where $D\br$ is the derivative of $\br(\bz)$, i.e. the Hessian  of $f_t(\bx,\by)$:
\bal
\label{eq.KKT.hessian}
D\br=\begin{bmatrix}
	\nabla_{\bx\bx}^{2}f_t & \nabla_{\bx\by}^{2}f_t\\
\nabla_{\by\bx}^{2}f_t & \nabla_{\by\by}^{2}f_t
\end{bmatrix}.
\eal
Closed-form expressions for gradients and Hessians are given in the Appendix.  By ignoring $o(\Delta \bz)$, \eqref{eq.norm.taylor} can be reduced to a system of linear equations in $\Delta\bz$:
\bal
\label{eq.update}
\br(\bz_k)+D\br \Delta \bz=0
\eal
which can be solved numerically using any of the existing (and efficient) techniques. When Hessian $D\br$ is non-singular, $\Delta \bz$ in \eqref{eq.update} has a unique solution. In our case, the non-singularity of $D\br$ at each step of the Newton method is rigorously established below. After computing $\Delta \bz$ from \eqref{eq.update}, $\bz$ is updated as follows
\bal
\bz_{k+1}=\bz_k+s\Delta \bz
\eal
where $k$ denotes the Newton iteration number (step) and where $s>0$ is the step size, which can be found via backtracking line search (see e.g. \cite{Boyd-04} for a background on this method). The Newton method in combination with the backtracking line search is guaranteed to reduce the residual
norm $|\br(\bz)|$ at each step, which follows from the  respective norm-reduction property \cite{Boyd-04},
so that for sufficiently small $s$, the residual norm shrinks at each iteration approaching $\br(\bz_k)=0$ as $k$ increases. After several iterations, the convergence becomes quadratic (see \cite{Boyd-04} for related definitions and analysis) and hence very fast, so that the optimal point $(\bR(t),\bK(t))$ of the problem \eqref{eq.barrier} can be approached with any desired accuracy in a small to moderate number of steps. Following the barrier method, the problem in \eqref{eq.barrier} is solved for sequentially increasing $t$, where the optimal point of the previous $t$ serves as an initial point for the new, increased $t$, thus minimizing the total number of Newton iterations required \cite{Boyd-04}. It can be shown that $f(\bR(t),\bK(t))\rightarrow C$ as $t \rightarrow \infty$ so that any desired accuracy can be reached (see Proposition \ref{prop.gap.C} below).

The proposed algorithm is shown below, where $\alpha$ is the percentage of the linear decrease in the residual norm one is willing to accept in the backtracking line search; $\beta$ and $\eta$ are the parameters controlling reduction in step size $s$ and increase in barrier parameter $t$ at each iteration of the respective loop of the algorithm, $\epsilon$ is the target residual accuracy, $t_0$ and $t_{max}$ are initial and maximum values of the barrier parameter; $t$ varies from $t_0$ to $t_{max}$, where the latter controls the accuracy of the barrier method so that the inaccuracy in the secrecy capacity due to the barrier method does not exceed $\max\{m+1+K,n_1+n_2\}/t_{max}$. $\bz_{0}=[\bx_0^T,\by_0^T]^T$ is an initial point defined as follows
\bal\notag
\label{eq.initial.a}
&\bx_{0}=veh(P_{T}\bI/a),\ \by_{0}=\bo\\
&a = 2 \max \{ m, \{ tr(\bW_{3j})P_T /P_{Ij}\} \},
\eal
so that $\bR_0 \in S'_{\bR}, \bK_0 \in S'_{\bK}$. Note that $\bR_0$ represents isotropic signaling satisfying all power constraints and $\bK_0$ represents uncorrelated noise. Numerical experiments show that this initial point results in fast convergence in all studied cases. While the barrier method generates a sequence of $\{\bR(t),\bK(t)\}$ which are strictly inside the feasible set (e.g. non-singular), they may approach the boundary arbitrary closely, thus representing a rank-deficient solution. In this case, non-zero but very small eigenvalues of $\bR(t)$ can be rounded off to zero facilitating low-complexity (low-rank) implementation, which includes beamforming as a special case.

%\newpage
\begin{algorithm}[h]
	\caption{(for optimal secure signaling under interference constraints)}
	\label{alg.1}
	\begin{algorithmic}
		\State \bRequire\  $\bz_{0}$, $0<\alpha <0.5$, $0<\beta<1$, $t_{max}> t_{0}>0$, $\eta>1$, $\epsilon >0 $.
        \State 1. Set $t=t_0$; compute $\br(\bz_0)$ via \eqref{eq.r(z)}.
		\Repeat\ \ (barrier method)
        \State 2. Set $k=0$.
		\Repeat\ \ (Newton method)
		\State 3. Compute the Hessian matrix $D\br$ via \eqref{eq.KKT.hessian}.
		\State 4. Compute update $\Delta \bz$ via \eqref{eq.update}.
		\State 5. Set $s=1$.
		\Repeat\ \ (backtracking line search)
		\State 6. $s:=\beta s$.
        \State 7. Update $\bz_{k+1}=\bz_{k}+s\Delta \bz$; compute $\br(\bz_{k+1})$
		\Until {$|\br(\bz_{k+1})|\leqslant (1-\alpha s)|\br(\bz_{k})|$ and  $\bR_{k+1} \in S_{\bR}',\bK_{k+1} \in S_{\bK}'$}
		\State 8. $k:=k+1$.
		\Until{$|\br(\bz_{k})|\leqslant \epsilon$}
		\State 9. Compute $f(\bR_k,\bK_k)$, $C(\bR_k)$.
        \State 10. Set $\bz_{0}:=\bz_{k}$ as a new starting point.
        \State 11. Update $t:=\eta t$.
		\Until{$t>  t_{max}$}
        \State 12. Output: $(\bR_k,\bK_k)$, $f(\bR_k,\bK_k)$, $C(\bR_k)$.
	\end{algorithmic}
\end{algorithm}

%\vspace*{-1\baselineskip}
%======================================================================
\subsection{Analysis of Algorithm 1}
\label{sec.A1a}

In this section, we prove the convergence of Algorithm 1 to a globally-optimal solution of the problem in \eqref{eq.secrecy} using the steps of the convergence analysis in \cite{Boyd-04} and adapting them properly to the current setting. First, from the residual norm-reduction property of the Newton method (see Sec. 10.3 in \cite{Boyd-04}),
\bal
\label{eq.ds.r}
\frac{d}{ds} |\br(\bz_k+s\Delta \bz)| = -|\br(\bz_k)| \le 0
\eal
so that the termination condition of the backtracking line search in Algorithm 1 is satisfied for a sufficiently-small $s>0$,
\bal
%\label{eq.ds.r}
|\br(\bz_k+s\Delta \bz)| = (1-s)|\br(\bz_k)|  +o(s) \le (1-\alpha s)|\br(\bz_k)|
\eal
where $0<\alpha <0.5$, and hence
\bal
\label{eq.r(zk)}
|\br(\bz_{k+1})| \le |\br(\bz_k)|
\eal
so that $\{|\br(\bz_k)|\}$ is a decreasing sequence that converges (since  it is bounded from below by 0); from \eqref{eq.ds.r}, a convergence point is 0 (otherwise, $|\br(\bz)|$ could be further reduced as the inequality in \eqref{eq.r(zk)} is strict if $|\br(\bz_k)|>0$), i.e. to a point that solves the KKT conditions. This point is globally-optimal since the KKT conditions are sufficient for global optimality of (P2) due to the convex-concave nature of $f_t(\bR,\bK)$, as explained above.

It remains to show that, (i) at each step of the Newton method, \eqref{eq.update} can be solved to obtain update $\Delta\bz$, and that (ii) $f_t(\bR(t),\bK(t))$ will approach $f(\bR',\bK') = C$ arbitrary closely as $t$ increases, where $(\bR',\bK')$ is an optimal (saddle) point of (P1).

To establish first point, it is sufficient to show that the Hessian $D\br$ is non-singular at each Newton step

\begin{prop}
\label{prop.non-singular.KKT}
Consider the max-min problem $\mathrm{(P2)}$ in \eqref{eq.barrier}. Its Hessian $D\br$ as defined in \eqref{eq.KKT.hessian} is non-singular for each  $t>0$, $\bR \in S'_{\bR},\bK \in S'_{\bK}$.
\end{prop}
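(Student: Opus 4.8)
The plan is to exploit the convex-concave structure of $f_t$, sharpened by the log-barrier terms, so as to show that the two diagonal blocks of $D\br$ are \emph{strictly} definite of opposite sign, and then to rule out a zero eigenvalue by a Schur-complement (inertia) argument. Write
\bal\notag
D\br=\begin{bmatrix} \bA & \bB \\ \bB^{T} & \bC \end{bmatrix},\quad
\bA=\nabla_{\bx\bx}^{2}f_t,\ \bC=\nabla_{\by\by}^{2}f_t,\ \bB=\nabla_{\bx\by}^{2}f_t,
\eal
where the off-diagonal blocks are mutual transposes because $D\br$ is the Hessian of the scalar $f_t$ and is therefore symmetric. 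The whole claim then reduces to establishing $\bA<\bo$ (negative definite) and $\bC>\bo$ (positive definite) for every $t>0$, $\bR\in S'_{\bR}$, $\bK\in S'_{\bK}$.

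First I would prove $\bA<\bo$. For fixed $\by$ (equivalently fixed $\bK$), the map $\bx=veh(\bR)\mapsto f_t$ is $f(\bR,\bK)$ plus the barrier terms $I_1,I_2,I_{3j}$. Since $f(\bR,\bK)$ is concave in $\bR$ (the convex-concave property quoted after Theorem 1), its $\bx$-Hessian is negative semi-definite; each of $I_2$ and $I_{3j}$ is a logarithm of an affine function of $\bR$ and hence concave, contributing further negative semi-definite terms; and crucially $I_1=t^{-1}\ln|\bR|$ has a \emph{strictly} negative definite Hessian on the space of symmetric matrices for any $\bR>\bo$, since the second differential of $\ln|\bR|$ along a symmetric direction $\bX\neq\bo$ equals $-tr(\bR^{-1}\bX\bR^{-1}\bX)=-|\bR^{-1/2}\bX\bR^{-1/2}|^{2}<0$. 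Because $veh(\cdot)$ is a linear parametrization, strict concavity in $\bR$ carries over to $\bx$, and summing yields $\bA<\bo$.

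Next I would prove $\bC>\bo$ by the mirror argument. For fixed $\bx$ (fixed $\bR$), $f_t$ seen as a function of $\by=vec(\bN)$ is $f(\bR,\bK)$, which is convex in $\bK$ and hence convex in $\bN$ via composition with the affine map $\bN\mapsto\bK$ in \eqref{eq.set.K}, plus $-I_4=-t^{-1}\ln|\bK|$, which is strictly convex in $\bK$ (its second differential along $\bDelta\bK\neq\bo$ is $+tr(\bK^{-1}\bDelta\bK\bK^{-1}\bDelta\bK)>0$). Since the map $\bN\mapsto\bK$ is injective and affine, a perturbation $\bDelta\bN\neq\bo$ induces $\bDelta\bK\neq\bo$, so strict convexity is preserved in $\by$; adding the positive-semi-definite contribution of $f$ gives $\bC>\bo$. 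With $\bA<\bo$ (hence invertible) and $\bC>\bo$ in hand, $D\br$ is non-singular iff its Schur complement $\bS=\bC-\bB^{T}\bA^{-1}\bB$ is; because $\bA^{-1}<\bo$, the term $\bB^{T}\bA^{-1}\bB$ is negative semi-definite, so $\bS\geq\bC>\bo$ is positive definite and thus invertible, giving non-singularity of $D\br$ (equivalently, by Haynsworth inertia additivity, $D\br$ has exactly $\dim\bx$ negative and $\dim\by$ positive eigenvalues and no zero eigenvalue).

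The main obstacle is not the concluding linear-algebra step but the strict definiteness of the diagonal blocks: $f$ alone is only semi-definite in each variable — directions lying in the null space of $\bH$ or corresponding to rank-deficient $\bR$ leave the first log-det term unchanged — so the argument genuinely relies on the barriers $I_1$ and $-I_4$ to supply the missing strict definiteness, and on verifying that the $veh$/$vec$ parametrizations and the affine embedding $\bN\mapsto\bK$ preserve strict concavity and convexity respectively.
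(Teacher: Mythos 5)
Your proof is correct and follows the same two-step architecture as the paper's own proof (Lemmas \ref{lemma.partial.hessian} and \ref{lemma.hessian} in the Appendix): strict definiteness of the diagonal blocks with opposite signs, then a block-matrix argument; the difference lies in how each step is executed. For the blocks, the paper works directly with the closed-form Hessians: it proves $\bZ_1\ge\bZ_2$ from $\bH^T\bK^{-1}\bH\ge\bW_2$ (and $(\bK+\bQ)^{-1}\le\bK^{-1}$ for the $\by$-block), lifts these inequalities through Kronecker squares, and obtains strictness from the barrier contributions $t^{-1}\bR^{-1}\otimes\bR^{-1}>\bo$ and $t^{-1}\bK^{-1}\otimes\bK^{-1}>\bo$, restricting via the full-column-rank duplication matrices $\bD_m$ and $\widetilde{\bD}_{n}$ --- exactly the role you assign to $I_1$, $-I_4$ and the $veh$/$vec$ parametrizations. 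You instead take the concavity/convexity of $f$ as given, which is legitimate here since the paper asserts it with references before stating the proposition; but note that this outsourced concavity of $f(\cdot,\bK)$ is precisely what the paper's inequality $\bZ_1\ge\bZ_2$ re-proves from scratch, and it genuinely uses the unit diagonal blocks of $\bK\in S_{\bK}$ (via $\bH^T\bK^{-1}\bH\ge\bH_2^+\bK_{22}^{-1}\bH_2=\bW_2$ with $\bK_{22}=\bI$), so the paper's version is self-contained and simultaneously certifies the explicit Hessian formulas that Algorithm 1 actually computes, whereas yours is shorter, coordinate-free, and would survive replacing the log-det barriers by any strictly concave/convex barrier. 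For the concluding step, your Schur-complement bound $\bC-\bB^{T}\bA^{-1}\bB\ge\bC>\bo$ is mathematically equivalent to the paper's partitioned-determinant identity $|D\br|=|-\bT_{11}|\,|\bT_{22}+\bT_{21}\bT_{11}^{-1}\bT_{12}|$, but is arguably cleaner: it avoids the sign bookkeeping in $|-\bT_{11}|$ and delivers the inertia of $D\br$ (exactly $\dim\bx$ negative and $\dim\by$ positive eigenvalues) as a by-product.
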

\begin{proof}
See Appendix.
\end{proof}

In fact, Proposition \ref{prop.non-singular.KKT} ensures that the update equation \eqref{eq.update} has a \textit{unique} solution at each Newton step. To demonstrate second point, we give below a sub-optimality bound for the barrier method, from which it follows that $f(\bR(t),\bK(t)) \rightarrow C$ as $t \rightarrow \infty$.

\begin{prop}
\label{prop.gap.C}
	For each $t>0$, the gap of the barrier method used in \eqref{eq.barrier} can be upper bounded as follows:
	\bal
	\label{eq.opt.gap.2}
	|f(\bR(t),\bK(t)) - C| \le \max(m_R,n_K)/t
	\eal
where $\bR(t),\bK(t)$ are the optimal signal and noise covariance matrices returned by the barrier method for a given $t$; $n_K = n_1+n_2$, $m_R = m+ 1 + K$ and $K$ is the number of IPCs.
\end{prop}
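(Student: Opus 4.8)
The plan is to split the two-sided gap $|f(\bR(t),\bK(t))-C|$ into two independent one-sided estimates, one from the outer maximization over $\bR$ and one from the inner minimization over $\bK$, each a direct instance of the standard logarithmic-barrier sub-optimality bound. The starting observation is that $f_t$ is concave in $\bR$ and convex in $\bK$, so the stationarity conditions $\nabla_\bR f_t=\bo$ and $\nabla_\bK f_t=\bo$ holding at the central-path point $(\bR(t),\bK(t))$ make it simultaneously the \emph{maximizer} of $f(\bR,\bK(t))+I_1(\bR)+I_2(\bR)+\sum_j I_{3j}(\bR)$ over $S'_\bR$ and the \emph{minimizer} of $f(\bR(t),\bK)-I_4(\bK)$ over $S'_\bK$. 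I would record these two characterizations first, as they let the $\bR$- and $\bK$-analyses proceed separately.

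For the outer side, I would read the $\bR$-stationarity condition as the KKT system of $\max_{\bR\in S_\bR}f(\bR,\bK(t))$ with the barrier-induced dual variables $\bM=t^{-1}\bR^{-1}\ge\bo$ (for $\bR\ge\bo$), $\lambda_T=t^{-1}/(P_T-tr(\bR))\ge 0$ (for the TPC) and $\lambda_{Ij}=t^{-1}/(P_{Ij}-tr(\bW_{3j}\bR))\ge 0$ (for the $K$ IPCs). These are dual feasible and $\bR(t)$ maximizes the associated Lagrangian, so with $\phi(\bK):=\max_{\bR\in S_\bR}f(\bR,\bK)$ weak duality yields $\phi(\bK(t))\le f(\bR(t),\bK(t))+[\,tr(\bM\bR(t))+\lambda_T(P_T-tr(\bR(t)))+\sum_j\lambda_{Ij}(P_{Ij}-tr(\bW_{3j}\bR(t)))\,]$. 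The bracketed complementary-slackness sum evaluates to $(m+1+K)/t=m_R/t$, so $\phi(\bK(t))-f(\bR(t),\bK(t))\le m_R/t$; combined with the weak max-min inequality $\phi(\bK(t))\ge\min_\bK\phi(\bK)\ge\max_\bR\min_\bK f=C$, this gives $f(\bR(t),\bK(t))\ge C-m_R/t$.

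For the inner side, the same computation applied to $\min_{\bK\in S_\bK}f(\bR(t),\bK)$ uses the single dual matrix $\bZ=t^{-1}\bK^{-1}\ge\bo$ for $\bK\ge\bo$, whose complementary-slackness product is $tr(\bZ\bK(t))=t^{-1}\,tr(\bK(t)^{-1}\bK(t))=n_K/t$. Writing $g(\bR):=\min_{\bK\in S_\bK}f(\bR,\bK)$, weak duality then gives $f(\bR(t),\bK(t))-g(\bR(t))\le n_K/t$, and because $g(\bR(t))\le\max_\bR g(\bR)=C$ I obtain $f(\bR(t),\bK(t))\le C+n_K/t$. Combining the two one-sided estimates yields $-m_R/t\le f(\bR(t),\bK(t))-C\le n_K/t$, which is exactly $|f(\bR(t),\bK(t))-C|\le\max(m_R,n_K)/t$. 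The gradient and slackness computations here are routine once the closed-form derivatives from the Appendix are substituted.

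The step I expect to require the most care is the degree count $n_K$ on the inner side, because $S_\bK$ carries the equality constraint fixing the diagonal blocks of $\bK$ to $\bI$ (absorbed in the parametrization $\by=vec(\bN)$). I must verify that this equality does not shrink the effective barrier degree below $n_1+n_2$: treating the diagonal-block equalities as explicit constraints with their own free multipliers, I would show their complementary-slackness contribution vanishes because the equalities hold \emph{exactly}, so the only nonzero contribution to the gap is $tr(\bZ\bK(t))=n_K/t$ from the $\ln|\bK|$ barrier over the full $(n_1+n_2)\times(n_1+n_2)$ matrix. A minor accompanying check is that $(\bR(t),\bK(t))$ lies strictly inside $S'_\bR\times S'_\bK$ (ensured by the barrier's boundary blow-up and by Proposition \ref{prop.non-singular.KKT}), so that the interior stationarity characterizations above are valid; note that, pleasantly, neither one-sided bound requires strong duality, as the weak max-min inequality suffices throughout.
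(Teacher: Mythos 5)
Your proof is correct, and it reaches the bound by the same basic decomposition as the paper: two one-sided barrier estimates, $f(\bR(t),\bK(t))\ge C-m_R/t$ from the $\bR$-side and $f(\bR(t),\bK(t))\le C+n_K/t$ from the $\bK$-side, with the same constants $m_R=m+1+K$ and $n_K=n_1+n_2$. The differences are in how each one-sided estimate is certified. The paper simply invokes the standard barrier-method sub-optimality analysis (Sec.~11.6 of Boyd--Vandenberghe) as a black box for both sides, whereas you re-derive it from first principles: you read the central-path stationarity conditions as exhibiting explicit dual-feasible certificates ($\bM=t^{-1}\bR^{-1}$, $\lambda_T$, $\lambda_{Ij}$, and $\bZ=t^{-1}\bK^{-1}$) and evaluate the complementary-slackness products to get exactly $m_R/t$ and $n_K/t$. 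This buys two things the paper glosses over: first, a careful justification that the equality constraints defining $S_{\bK}$ (diagonal blocks fixed to $\bI$, absorbed in the $\by=vec(\bN)$ parametrization) do not alter the degree count $n_K$, since their multiplier terms vanish identically at feasible points; second, and more substantively, your lower-bound chain $\max_{\bR}f(\bR,\bK(t))\ge\min_{\bK}\max_{\bR}f\ge\max_{\bR}\min_{\bK}f=C$ uses only the \emph{weak} minimax inequality, whereas the paper closes its lower bound with the saddle-point equality $\min_{\bK}\max_{\bR}f=\max_{\bR}\min_{\bK}f=C$, which it imports from the Von Neumann minimax argument of \cite{Dong-18}. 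So your version is self-contained and logically slightly leaner (no strong-duality/saddle-point input is needed), at the cost of more bookkeeping; the paper's version is shorter by leveraging both the textbook barrier analysis and the already-established saddle-point property of (P1).
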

\begin{proof}
To establish the bound, consider first the min part of (P1) in \eqref{eq.secrecy} for a fixed $\bR=\bR(t)>0$ and use the analysis of the barrier method in Sec. 11.6 of \cite{Boyd-04} to obtain an upper bound:
\bal\notag
f(\bR(t),\bK(t)) &\le \min_{\bK\in \sS_{\bK}} f(\bR(t),\bK)+n_K/t \\
&\le \max_{\bR\in \sS_{\bR}} \min_{\bK\in \sS_{\bK}} f(\bR,\bK)+n_K/t
\eal
where $n_K = n_1+n_2$ accounts for the constraint $\bK\ge 0$. Consider now the max part of (P1) for a fixed $\bK=\bK(t)>0$ and use the same approach to obtain
\bal\notag
f(\bR(t),\bK(t)) &\ge \max_{\bR\in \sS_{\bR}} f(\bR,\bK(t)) -m_R/t\\
 &\ge \min_{\bK\in \sS_{\bK}} \max_{\bR\in \sS_{\bR}} f(\bR,\bK) -m_R/t
\eal
where $m_R = m+ 1 + K$; $m$ accounts for the positive semi-definite constraint $\bR\ge 0$, while $1$ and $K$ account for the TPC and IPC respectively. Combining these two bounds, one obtains
\bal
C -m_R/t \le f(\bR(t),\bK(t)) \le C + n_K/t
\eal
from which \eqref{eq.opt.gap.2} follows, where we have used the saddle-point property of the problem (P1):
\bal
C = \min_{\bK\in \sS_{\bK}} \max_{\bR\in \sS_{\bR}} f(\bR,\bK) = \max_{\bR\in \sS_{\bR}} \min_{\bK\in \sS_{\bK}} f(\bR,\bK)
\eal
which follows from Von Neumann mini-max theorem \cite{Dong-18}.
\end{proof}

Note that the original objective $f(\bR,\bK)$ is used in \eqref{eq.opt.gap.2}, not the modified one $f_t$. The bound in \eqref{eq.opt.gap.2} can be used in practice to set up $t_{max}$ to meet a target accuracy $\Delta C$ in terms of the achieved secrecy rate $f(\bR(t),\bK(t))$: if $|C - f(\bR(t),\bK(t))| \le \Delta C$ is needed, then setting
\bal
t_{max} \ge \max(m_R,n_K)/\Delta C
\eal
will satisfy this requirement.

\section{Optimal Covariance in the Singular Case}
\label{sec.algorithm2}

Algorithm 1 can be used to evaluate the secrecy capacity in the general case by evaluating numerically $f(\bR',\bK')=C$ at saddle point $(\bR',\bK')$ (i.e. the optimal point of the max-min problem (P1) in \eqref{eq.secrecy}). However, $\bR'$ is not necessarily an optimizer of $C(\bR)$, i.e. not an optimal Tx covariance $\bR^* = \argmax_{\bR\in S_{\bR}} C(\bR)$, so that $C(\bR') < C(\bR^*)=f(\bR',\bK')$ is possible. This happens when the TPC is inactive and, for all active IPCs, the sum $\sum_{j+} \bW_{3j}$ is singular (i.e. the intersection of their null spaces is not empty). The following example \cite{Dong-18} illustrates this point. Let
\bal
\label{eq.ED3}
\bH_1=diag\{1,0\},\ \bH_2=diag\{0,1\},\ \bW_3=diag\{1,0\}
\eal
It is straightforward to see that the saddle-point $(\bR',\bK')$  is
\bal
\label{eq.Ex.K'R'}
\bK'=\bI,\ \bR'=diag\{\min(P_T,P_I), a\}
\eal
where $a$ is any in the interval $0\le a \le (P_T-P_I)_+$, so that $\bR'$ is not unique if $P_T> P_I$. The optimal covariance is
\bal
\bR^*=diag\{\min(P_T,P_I), 0\}
\eal
Thus, $\bR^* \neq \bR'$ (unless $a=0$) and
\bal
C = f(\bR',\bK') = \ln(1+\min(P_T,P_I))
\eal
for any $a$. However, if one sets $a = (P_T-P_I)_+$, then
\bal
\label{eq.Ex.C(R')<C(R*)}
C(\bR') = \ln\frac{1+\min(P_T,P_I)}{1+(P_T-P_I)_+} < C= C(\bR^*)
\eal
where the inequality holds if $P_T>P_I$ (negative $C(\bR')$ is interpreted as zero rate). Hence, $\bR'$ is not an optimal transmit covariance $\bR^*$ (one maximizing the secrecy rate $C(\bR)$). We conclude that while the application of Algorithm 1 is possible to find the secrecy capacity via $C= f(\bR',\bK')$, it cannot be used to find $\bR^*$ in the singular case, since $\bR' \neq \bR^*$ is possible and, furthermore, $C(\bR') < C= C(\bR^*)$ is also possible (as a side remark, we note that this effect disappears if the IPCs are removed, since the TPC is always active in this case).

Therefore, the singular case needs special treatment to establish an optimal signaling strategy (optimal covariance), not just the capacity. This is done below via Algorithm 2, which incorporates Algorithm 1 and bisection search to find an optimal covariance as well as the least Tx power required to achieve the secrecy capacity (this may be smaller than the TPC power $P_T$ in the singular case).

To this end, let $C(P_T)$ be the secrecy capacity as a function of TPC power $P_T$, with all interference constraint powers being fixed, and let
\bal
\label{eq.P0}
P_0= \min\{P: C(P_T)\le C(P)\ \forall P_T \ge 0\}
\eal
so that $C(P_T) \le C(P_0) \ \forall P_T \ge 0$, i.e. $C(P_T)$ saturates at $C(P_0)$ as $P_T$ increases; $P_0 = \infty$ corresponds to no saturation. It follows from the definition of $P_0$ that $P_{T,min}=\min\{P_T, P_0\}$ is the minimum Tx power required to achieve the capacity $C(P_T)$. Note that $P_{T,min} < P_T$ if $P_T > P_0$, i.e. Tx power saving is possible and hence it is important to evaluate $P_0$ as well.

The following general properties of the function $C(P_T)$ are needed below to construct an algorithm and to prove its convergence. To the best of our knowledge, these properties of the secrecy capacity never appeared in the literature before, even without interference constraints. We will assume below that $C(P_0) > 0$, i.e. $C(P_T)$ is not identically 0 for all $P_T$ (which would be the case for a reversely-degraded channel).

\begin{prop}
\label{prop.Properties}
Let $C(P_0) > 0$. The secrecy capacity $C(P_T)$ as a function of TPC power $P_T$ (under fixed $P_{Ij}$) has the following properties:

1. $C(P_T)$ is a non-decreasing function of $P_T$; strictly-increasing for any  $P_T < P_0$.

2. $C(P_T)$ is a concave, continuous function of $P_T$.

3. If $C(P_T) = C(P_1)$ for some $P_1 > P_T$, then this holds for any $P_1 > P_T$. Equivalently, if $C(P_T)'_+=0$, then $C(P_1)'_+=0$ for any $P_1 > P_T$, where $C(P_T)'_+$ is the right derivative; additionally, $C(P_T)'_+=0$ for any $P_T \ge P_0$.

4. If $C(P_1) < C(P_T)$ for some $P_1$, then $C(P_2) < C(P_1)$ for any $P_2 < P_1$. Equivalently, if $C(P_T)'_- > 0$, then $C(P_1)'_- > 0$ for any $P_1 < P_T$, , where $C(P_T)'_-$ is the left derivative; additionally, $C(P_T)'_- >0$ for any $P_T \le P_0$, i.e. $C(P_T)$ is strictly increasing for any $P_T < P_0$.
\end{prop}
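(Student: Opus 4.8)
The plan is to derive all four properties from just two structural facts: monotonicity via set inclusion, and concavity via the minimax (rather than maximin) representation of $C(P_T)$. Properties 1, 3 and 4 are then elementary consequences of concavity together with the definition of $P_0$; the only genuinely non-trivial step is concavity (Property 2).

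I would begin with monotonicity. Writing $S_{\bR}(P_T)$ to make the dependence on $P_T$ explicit, note that increasing $P_T$ only relaxes the trace constraint $tr(\bR)\le P_T$ while leaving the IPCs $tr(\bW_{3j}\bR)\le P_{Ij}$ untouched, so $P_T\le P_T'$ implies $S_{\bR}(P_T)\subseteq S_{\bR}(P_T')$. Since $C(P_T)=\max_{\bR\in S_{\bR}(P_T)}C(\bR)$ is a maximum over a set that grows with $P_T$, it is non-decreasing, which gives the first half of Property 1.

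The crux is concavity, and the main obstacle is that $C(\bR)$ is \emph{not} concave in $\bR$ (as the paper stresses), so a naive convex-combination-of-covariances argument applied directly to the secrecy rate fails. The remedy is to pass to the minimax form: by Theorem 1 and the saddle-point (Von Neumann) equality already invoked in the proof of Proposition 2, for each fixed $P_T$ one has $C(P_T)=\min_{\bK\in S_{\bK}}\max_{\bR\in S_{\bR}(P_T)}f(\bR,\bK)$. Define, for fixed $\bK$, $\phi_{\bK}(P_T)=\max_{\bR\in S_{\bR}(P_T)}f(\bR,\bK)$. I would show $\phi_{\bK}$ is concave in $P_T$ by a time-sharing argument: given maximizers $\bR_1,\bR_2$ for powers $P_T^{(1)},P_T^{(2)}$ and $\theta\in[0,1]$, the combination $\bR_{\theta}=\theta\bR_1+(1-\theta)\bR_2$ satisfies $tr(\bR_{\theta})\le\theta P_T^{(1)}+(1-\theta)P_T^{(2)}$ and, since $\bR_1,\bR_2$ meet the \emph{same} (fixed) IPCs, $tr(\bW_{3j}\bR_{\theta})\le P_{Ij}$, so $\bR_{\theta}\in S_{\bR}(\theta P_T^{(1)}+(1-\theta)P_T^{(2)})$. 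Concavity of $f(\cdot,\bK)$ in $\bR$ for fixed $\bK$ then yields $\phi_{\bK}(\theta P_T^{(1)}+(1-\theta)P_T^{(2)})\ge f(\bR_{\theta},\bK)\ge\theta\phi_{\bK}(P_T^{(1)})+(1-\theta)\phi_{\bK}(P_T^{(2)})$. Because $C(P_T)=\min_{\bK}\phi_{\bK}(P_T)$ is a pointwise infimum of concave functions of $P_T$, it is itself concave. Continuity on $(0,\infty)$ is then automatic (a finite concave function is continuous on the interior of its domain), and continuity at $P_T=0$ follows since the feasible set collapses to $\{\bo\}$ with $C(\bo)=0$; this completes Property 2.

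Properties 1 (strict part), 3 and 4 follow from concavity, monotonicity and the definition of $P_0$. For Property 3, if $C(P_T)=C(P_1)$ with $P_1>P_T$ then $C$ is constant on $[P_T,P_1]$, and concavity forces, for any $P_2>P_1$, the chord-slope bound $[C(P_2)-C(P_T)]/(P_2-P_T)\le[C(P_1)-C(P_T)]/(P_1-P_T)=0$, so $C(P_2)\le C(P_T)$, which with monotonicity gives $C(P_2)=C(P_T)$; the derivative reformulation $C(P_T)'_+=0$ and the claim $C(P_T)'_+=0$ for $P_T\ge P_0$ follow since $C$ is constant and equal to its supremum on $[P_0,\infty)$ by the definition of $P_0$. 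Property 4 is the mirror image: if $C(P_2)=C(P_1)$ for some $P_2<P_1$, then by the saturation conclusion of Property 3 the function $C$ would be constant for all powers exceeding $P_2$, contradicting $C(P_1)<C(P_T)$; hence $C(P_2)<C(P_1)$, and the corresponding statement $C(P_1)'_->0$ for $P_1<P_T$ follows from monotonicity of the derivatives of a concave function. Finally, strict monotonicity on $[0,P_0)$ (the second half of Property 1 and the last claim of Property 4) follows because for $P_a<P_b\le P_0$ with $P_a<P_0$, minimality of $P_0$ gives $C(P_a)<C(P_0)$, and an equality $C(P_a)=C(P_b)$ would again trigger the saturation argument of Property 3 and contradict $C(P_a)<C(P_0)$.
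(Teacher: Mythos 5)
Your proposal is correct, and its key step takes a genuinely different route from the paper's. For concavity (Property 2), the paper argues \emph{operationally}: it time-shares optimal coding schemes for powers $P_1$ and $P_2$ (the IPCs hold instantaneously in each phase, the average transmit power is $\theta P_1+(1-\theta)P_2$), which achieves the secrecy rate $\theta C(P_1)+(1-\theta)C(P_2)$ and yields a contradiction with non-concavity; the paper explicitly remarks that the standard optimization-based proof fails because $C(\bR)$ is not concave. You sidestep that obstruction purely variationally: pushing the minimization over $\bK$ outside via the Von Neumann saddle-point equality (which the paper invokes in the proof of Proposition 2 and which holds at every $P_T$ since $f$ is convex-concave on compact convex sets), the inner objective $f(\cdot,\bK)$ \emph{is} concave, so the perturbation-function argument applies for each fixed $\bK$, and a pointwise minimum of concave functions of $P_T$ is concave. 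Your route needs no coding-theoretic time-sharing fact but leans on the minimax equality at all power levels; the paper's route is information-theoretic rather than optimization-based. For Properties 3 and 4 you use chord-slope and supergradient inequalities where the paper integrates one-sided derivatives via Rockafellar's Corollary 24.2.1 --- equivalent content, yours arguably more elementary. Two minor points: your continuity claim at $P_T=0$ needs one extra line (e.g. $0\le C(P_T)\le m\ln(1+\lambda_{\max}(\bW_1)P_T)\to 0$), since a concave nondecreasing function can jump at the left endpoint (the paper is no more careful here); and the assertion $C(P_T)'_->0$ at exactly $P_T=P_0$ is delicate in both proofs --- your argument cleanly gives $C(P_T)'_-\ge (C(P_0)-C(P_T))/(P_0-P_T)>0$ only for $P_T<P_0$, which is the operative claim, and the paper's integral argument has the same limitation.
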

\begin{proof}
See Appendix
\end{proof}

Thus, $C(P_T)$ is concave, non-decreasing, and strictly-increasing for $P_T < P_0$. The rate of increase slows down with $P_T$. Note that for $P_T > P_0$, the capacity $C(P_T)$ can be achieved with smaller Tx power $P_0$. %It is straightforward to show that the secrecy capacity $C(P_{Ij})$ as a function of interference power $P_{Ij}$ has the same properties as $C(P_T)$ above, with $P_{0j}$ in place of $P_{0}$.
We will need below the following result to deal with the singular case.

\begin{prop}
\label{prop.muPT}
Let $\mu(P_T)$ be a Lagrange multiplier, as a function of TPC power $P_T$, responsible for the TPC in (P1). Then, $\mu(P_T)>0$ for any $P_T < P_0$, i.e. the TPC is always active below $P_0$, and $\mu(P_T)=0$ for any $P_T > P_0$.
\end{prop}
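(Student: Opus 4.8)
The plan is to identify the TPC multiplier $\mu(P_T)$ with a supergradient of the concave capacity function $C(P_T)$ and then read off its sign from Proposition \ref{prop.Properties}. First I would use that $\tilde{C}(\bR)=\min_{\bK\in S_{\bK}}f(\bR,\bK)$ is concave in $\bR$ (a pointwise minimum of functions concave in $\bR$), so that $C(P_T)=\max_{\bR\in S_{\bR}}\tilde{C}(\bR)$ is a concave maximization over a convex feasible set. Slater's condition holds (the isotropic $\bR_0$ used as the initial point of Algorithm 1 is strictly feasible), so dualizing only the TPC incurs no duality gap and
\bal\notag
C(P_T)=\min_{\mu\ge 0}\big[g(\mu)+\mu P_T\big],\quad g(\mu)=\max_{\substack{\bR\ge \bo,\ tr(\bW_{3j}\bR)\le P_{Ij}\,\forall j}}\ \min_{\bK\in S_{\bK}}\big[f(\bR,\bK)-\mu\, tr(\bR)\big].
\eal
For each fixed $\bR$ the bracket is affine in $\mu$, so $g(\mu)$ is convex and each map $P_T\mapsto g(\mu)+\mu P_T$ is affine with nonnegative slope $\mu$; thus $C(P_T)$ is an infimum of such affine functions, consistent with its concavity and monotonicity established in Proposition \ref{prop.Properties}.

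The optimal multiplier $\mu(P_T)$ is precisely a minimizer in the displayed dual, and by the standard parametric (envelope) argument it is a supergradient of $C$ at $P_T$, i.e.
\bal\notag
C'_+(P_T)\le \mu(P_T)\le C'_-(P_T),
\eal
where the one-sided derivatives $C'_\pm$ exist by concavity. This characterization is robust to any non-uniqueness of the multiplier, since every optimal multiplier lies in this interval. With it in hand, both regimes follow directly from Proposition \ref{prop.Properties}. For $P_T<P_0$ I would first show $C'_+(P_T)>0$: if instead $C'_+(P_T)=0$, then Property 3 gives $C'_+(P_1)=0$ for all $P_1>P_T$, so the concave, non-decreasing $C$ is constant on $[P_T,\infty)$ and hence saturates at $P_T$, forcing $P_0\le P_T$ and contradicting $P_T<P_0$. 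Therefore $\mu(P_T)\ge C'_+(P_T)>0$, i.e. the TPC is active. For $P_T>P_0$, Property 3 gives $C'_+(P_T)=0$, and since $C$ is constant beyond $P_0$ one also has $C'_-(P_T)=0$; hence $0\le\mu(P_T)\le C'_-(P_T)=0$, so $\mu(P_T)=0$ and the TPC is inactive.

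The main obstacle is the first step: rigorously justifying that the TPC multiplier is sandwiched between the one-sided derivatives of the optimal value $C(P_T)$. A naive single-level envelope theorem does not apply directly because of the inner minimization over $\bK$, so one must combine strong duality for the convex-concave problem (P1) (available through the Von Neumann min-max theorem \cite{Dong-18}, as already invoked in the proof of Proposition \ref{prop.gap.C}) with the perturbation/sensitivity analysis of Lagrange multipliers. The partial-dual reformulation above is the cleanest route I see to make this precise, as it simultaneously yields the supergradient interval and handles the points where $C$ fails to be differentiable. Once the supergradient property is in place, the sign conclusions in both regimes are immediate consequences of the monotonicity and saturation structure already recorded in Proposition \ref{prop.Properties}.
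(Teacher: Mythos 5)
Your proof is correct, but it takes a genuinely different route from the paper's. The paper splits the two regimes and handles them by different means: for $P_T<P_0$ it argues by contradiction at the KKT level --- it writes out the KKT systems of (P1) with and without the TPC and observes that if $\mu(P_T)=0$, any solution of the former also solves the latter (the KKT conditions being sufficient in both cases by the convex-concave structure), which would force $C(P_T)=C(\infty)\ge C(P_0)>C(P_T)$, contradicting the strict increase of $C$ below $P_0$ from Proposition \ref{prop.Properties}; for $P_T>P_0$ it simply invokes the sensitivity identity $\mu(P_T)=C(P_T)'=0$. You instead establish the supergradient sandwich $C'_+(P_T)\le\mu(P_T)\le C'_-(P_T)$ once and for all, via a partial Lagrangian dual in which only the TPC is dualized, and then read off both signs from Proposition \ref{prop.Properties}. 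What each approach buys: the paper's treatment of $P_T<P_0$ is more elementary and self-contained (no duality or perturbation machinery, only KKT sufficiency), whereas your sandwich puts the paper's treatment of $P_T>P_0$ --- which is asserted rather than justified there --- on rigorous footing, and it is robust to non-uniqueness of the multiplier and to non-differentiability of $C$. The step you rightly flag as the main obstacle, namely that the multiplier appearing in the KKT system of the max-min problem (P1) is indeed a supergradient of $C$ at $P_T$, can be closed even more directly than through your partial dual: since $\tilde C(\bR)\le f(\bR,\bK')$ for all $\bR$ with equality at $\bR'$, and $f(\cdot,\bK')$ is concave and smooth, $\nabla_{\bR}f(\bR',\bK')$ is a supergradient of $\tilde C$ at $\bR'$; hence $\mu$ is an optimal multiplier for the ordinary concave program $h(P)=\max\{f(\bR,\bK'):\bR\in \sS_R(P)\}$ and thus a supergradient of $h$ at $P_T$, and since $h(P)\ge C(P)$ for all $P$ with $h(P_T)=C(P_T)$ by the saddle-point property, $\mu$ is a supergradient of $C$ at $P_T$ as well. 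With that patch made explicit, your argument is complete and, in the $P_T>P_0$ regime, strictly more rigorous than the paper's.
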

\begin{proof}
See Appendix.
\end{proof}

%As a side remark, we note that Proposition \ref{prop.muPT} says nothing about $\mu(P_T)$ at $P_T=P_0$. This, however, is not necessary to construct Algorithm 2 below.

Based on this Proposition, we are now able to construct an iterative algorithm to evaluate optimal covariance in the singular case numerically with any desired accuracy. The key idea for the $P_T \ge P_0$ case (which is necessary for singularity) is to identify the saturation point $P_0$ and to apply Algorithm 1 with TPC power $P$ slightly less than $P_0$ (so that $\mu(P)>0$ and hence the TPC is active thus avoiding the singularity in this way), which achieves the secrecy rate arbitrary close to the capacity $C(P_T)=C(P_0)$ as $P$ approaches $P_0$ from below, and gives a covariance matrix achieving this secrecy rate as well.

\begin{algorithm}[h]
	\caption{(optimal signaling in the singular case)}
	\begin{algorithmic}
		\State \bRequire\  $\delta$, $\epsilon$
		\State 1. Compute $C = f(\bR',\bK')$ using Algorithm 1 for a given  $P_T$.
       \State 2. Set $P_{min}=0,\ P_{max}=P_T,\ P= P_T/2$.
       \State 3. Compute $f(\bR',\bK')$ under new TPC $tr(\bR)\leq P$ using Algorithm 1.
        \Repeat\ \ (bisection search)
		\State 4. If  $f(\bR',\bK') < (1-\epsilon)C$, set $P_{min}=P$; otherwise, set $P_{max}=P$.
        \State 5. Set $P=(P_{min}+P_{max})/2$.
		\State 6. Compute $f(\bR',\bK')$ under TPC $tr(\bR)\leq P$ using Algorithm 1.
		\Until{$P_{max}-P_{min} \leqslant \delta P_T$}
        \State 7. Compute $f(\bR',\bK')$ under TPC $tr(\bR)\leq P_{min}$ using Algorithm 1.
        \State 8. Output $\bR',\ C(\bR'),\ \Delta C = C - C(\bR'),\ P_{T,min}=P_{min}$, $\Delta P = P_{max} - P_{min}$.
    \end{algorithmic}
\end{algorithm}

Algorithm 2 returns nearly-optimal covariance $\bR'$ as well as its achieved secrecy rate $C(\bR')$ and its distance $\Delta C$ to the secrecy capacity $C$. In addition, the algorithm returns an approximate value of $P_{T,min}=\min\{P_T, P_0\}$, i.e. the minimum Tx power required to achieve $C(P_T)$, as well as its accuracy $\Delta P$.  Note that $\Delta C$ and $\Delta P$ can be made as small as necessary by setting sufficiently small $\delta$ and $\epsilon$ (this follows from the continuity of all functions involved as well as the compactness of the feasible set for any finite $P_T$, in addition to the nature of the bisection).

The condition in Line 4 is set to account for numerical imprecision effects in computing $f(\bR',\bK')$. While in theory one can set $\epsilon =0$, this can result in numerical instability in practice in some cases. Typical values of $\epsilon$ range between $10^{-2}$ (1\% accuracy) to $10^{-6}$; $\delta$ controls the accuracy of computed $P_{T,min}$ and $\delta = 10^{-2}$ corresponds to 1\% accuracy with respect to $P_T$.

\subsection{Analysis of Algorithm 2}

Here we provide a convergence analysis of Algorithm 2 to justify the claims above. To simplify the discussion, we consider first the case of $\epsilon =0$ and neglect the numerical imprecision effects (in particular, the imprecision of Algorithm 1, whose accuracy can be very high even for a small number of Newton steps), which is a standard assumption in the literature (see e.g. convergence analysis in \cite{Boyd-04}). The convergence of sufficiently small but non-zero $\epsilon >0$ will follow from the continuity of all functions involved.

Let $P_{min,k}$, $P_{max,k}$ and $P_k$ be the power values set in Line 4 and 5 of Algorithm 2, i.e. at $k$-th iteration of the bisection. Note that, due to the nature of the bisection, $\Delta_k = P_{max,k} - P_{min,k}$ is reduced by a factor of 2 at each step, so that
\bal
\Delta_k = P_{T}/2^k
\eal
The following proposition gives further important properties.
\begin{prop}
\label{prop.A2.prop}
The following holds at $k$-th iteration of the bisection in Algorithm 2 with $\epsilon=0$:
\bal
\label{eq.A2.prop.1}
P_{min,k} < P_k < P_{max,k} \le P_T
\eal
and $\{P_{min,k}\}$, $\{P_{max,k}\}$ are monotonically increasing and decreasing sequences, respectively. If $P_T \ge P_0$, then
\bal
\label{eq.A2.prop.2}
P_{min,k} < P_0 \le P_{max,k} \le P_T
\eal
If $P_T < P_0$, then
\bal
\label{eq.A2.prop.3}
P_{min,k} = P_T(1 - 2^{-k}),\ P_{max,k}=P_T
\eal
\end{prop}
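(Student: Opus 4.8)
The plan is to establish each of the three claims in Proposition \ref{prop.A2.prop} by tracking the mechanics of the bisection loop and invoking the monotonicity of $C(P_T)$ (Proposition \ref{prop.Properties}) together with the Lagrange-multiplier characterization in Proposition \ref{prop.muPT}. I would begin by recording the loop invariant from Lines 4--5. At each iteration, the new midpoint $P=(P_{min}+P_{max})/2$ replaces either $P_{min}$ (when $f(\bR',\bK')<C$, i.e. the current power is insufficient) or $P_{max}$ (otherwise), so $P_{min}$ can only increase and $P_{max}$ can only decrease, and the fresh midpoint always lies strictly between the updated endpoints. This immediately gives the chain \eqref{eq.A2.prop.1} and the stated monotonicity of $\{P_{min,k}\}$ and $\{P_{max,k}\}$; the bound $P_{max,k}\le P_T$ is preserved because the initialization sets $P_{max}=P_T$ and this endpoint is only ever decreased.

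For \eqref{eq.A2.prop.2}, assume $P_T\ge P_0$. The key fact is that $f(\bR',\bK')$, the secrecy capacity under the TPC power $P$, equals $C(P)$ with $C=C(P_T)=C(P_0)$ by the saturation property in Proposition \ref{prop.Properties}, part 3. Since $C(P)$ is strictly increasing for $P<P_0$ (part 1/4) and saturates at $C(P_0)$ for $P\ge P_0$, the test $f(\bR',\bK')<C$ in Line 4 is \emph{true} precisely when $P<P_0$ and \emph{false} when $P\ge P_0$. I would then argue by induction that the invariant $P_{min,k}<P_0\le P_{max,k}$ is maintained: it holds at initialization ($P_{min,0}=0<P_0$ and $P_{max,0}=P_T\ge P_0$), and each bisection step preserves it, because a midpoint $P<P_0$ is assigned to $P_{min}$ (keeping $P_{min}<P_0$) while a midpoint $P\ge P_0$ is assigned to $P_{max}$ (keeping $P_{max}\ge P_0$). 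Combined with $P_{max,k}\le P_T$, this yields \eqref{eq.A2.prop.2}.

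For \eqref{eq.A2.prop.3}, assume $P_T<P_0$. Here $C(P)$ is \emph{strictly} increasing on the entire interval $[0,P_T]$ (part 4 of Proposition \ref{prop.Properties}, since $P_T<P_0$), so for \emph{every} $P<P_T$ one has $f(\bR',\bK')=C(P)<C(P_T)=C$, making the Line-4 test true at every such midpoint. Consequently $P_{max}$ is never updated and stays at $P_T$, while $P_{min}$ always absorbs the midpoint. A short induction on the bisection recursion $P_{min,k}=(P_{min,k-1}+P_T)/2$ with $P_{min,0}=0$ then gives the closed form $P_{min,k}=P_T(1-2^{-k})$, establishing \eqref{eq.A2.prop.3}. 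The main obstacle, and the place requiring the most care, is the equivalence in the second case between the algorithmic test $f(\bR',\bK')<C$ and the analytic condition $P<P_0$: one must justify that the quantity Algorithm 1 returns is exactly $C(P)$ and that its behavior (strict increase below $P_0$, flat at and above $P_0$) is governed by Proposition \ref{prop.Properties}, and implicitly by $\mu(P_T)>0$ for $P<P_0$ from Proposition \ref{prop.muPT}, which guarantees the TPC is active and hence that increasing $P$ strictly raises the achievable rate.
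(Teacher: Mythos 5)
Your proof is correct and takes essentially the same approach as the paper's: the same loop-invariant bookkeeping on the bisection endpoints for \eqref{eq.A2.prop.1}, and the same use of the monotonicity of $C(P)$ from Proposition \ref{prop.Properties} to translate the Line-4 test $f(\bR',\bK')=C(P)<C$ into $P<P_0$ (resp.\ $P<P_T$), yielding \eqref{eq.A2.prop.2} and \eqref{eq.A2.prop.3}. Your additional appeal to Proposition \ref{prop.muPT} is harmless but unnecessary for this proposition---the paper's proof here relies only on Proposition \ref{prop.Properties}, reserving Proposition \ref{prop.muPT} for the later analysis of Line 7 of Algorithm 2.
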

\begin{proof}
To prove first two inequalities in \eqref{eq.A2.prop.1}, use $P_k =(P_{min,k}+ P_{max,k})/2$ and $\Delta_k > 0$ for any $k$. The last inequality is by construction of the algorithm, i.e. from $P_{max,0}=P_T$ and $P_{max,k+1} \le P_{max,k}$, which follows from the fact that either $P_{max,k+1} = P_{max,k}$ or $P_{max,k+1} =P_k < P_{max,k}$. Likewise, $P_{min,k+1} \ge P_{min,k}$, since either $P_{min,k+1} = P_{min,k}$ or $P_{min,k+1} =P_k > P_{min,k}$.

First inequality in \eqref{eq.A2.prop.2} follows from Line 4 (with $\epsilon=0$), which implies that $P_{min,k} = P$ iff $f(\bR',\bK') = C(P) < C = C(P_0)$ so that, from the monotonically-increasing property of $C(P)$ in Proposition \ref{prop.Properties} and the initial condition $P_{min,0}=0$, $P_{min,k} < P_0$. second inequality in \eqref{eq.A2.prop.2} is established in a similar way.

If $P_T < P_0$, then $P_{max,k}=P_T$, since $f(\bR',\bK') = C(P) = C = C(P_T)$ implies $P= P_T$, from the monotonically-increasing property of $C(P)$ in Proposition \ref{prop.Properties}, and the initial condition is $P_{max,0}=P_T$. First equality in \eqref{eq.A2.prop.3} follows from second and $\Delta_k = P_{T}/2^k$.
\end{proof}

Since $\Delta_k \rightarrow 0$ as $k \rightarrow \infty$, it follows from Proposition \ref{prop.A2.prop} that
\bal
P_{min,k},\ P_{max,k},\ P_k \rightarrow \min\{P_T, P_0\} = P_{T,min}
\eal
so that the minimum required power $P_{T,min}$ can be evaluated with any desired accuracy. Furthermore, the inaccuracy does not exceed $\Delta_k$ and, since $\Delta_k = P_{T}/2^k$, the convergence is exponentially fast, so that very few steps are required in practice to achieve high accuracy. The number $k_{\delta}$ of steps needed to achieve the target accuracy $\delta P_T$ is, from $\Delta_k \le \delta P_T$,
\bal
k_{\delta} = \left\lceil\log_2 \frac{1}{\delta}\right\rceil
\eal

Further note that Line 7 of Algorithm 2 evaluates $\bR'$ under $tr(\bR)\leq P_{min, k_{\delta}} < P_0$, where $k_{\delta}$ is the total number of bisections, so that $\mu(P_{min, k_{\delta}}) >0$ under this condition (since, from Proposition \ref{prop.muPT}, $\mu(P) > 0$ if $P < P_0$) and hence $ f(\bR',\bK')= C(\bR') = C(P_{min,k_{\delta}})$, i.e. $\bR'$ is a maximizer of $C(\bR)$ as well under the TPC power $P_{min, k_{\delta}}$. From the continuity of $C(P)$, $C(P_{min, k_{\delta}})\rightarrow C(P_T)$ as $k_{\delta} \rightarrow\infty$, or equivalently, $\Delta C \rightarrow 0$ as $\delta \rightarrow 0$, i.e. arbitrary high accuracy can be achieved in terms of the secrecy rate as well, with exponentially-fast convergence.

The case of non-zero $\epsilon >0$ can be considered in a similar albeit more technical way. Let $C^{-1}(\cdot)$ be the inverse function of $C(P)$ and
\bal
P_{0\epsilon} = C^{-1}((1-\epsilon)C(P_0))
\eal
so that $C(P_{0\epsilon}) = (1-\epsilon)C(P_0)$. The same steps as in the proof of Proposition \ref{prop.A2.prop} can be used to establish \eqref{eq.A2.prop.1}-\eqref{eq.A2.prop.2} with $P_{0\epsilon}$ in place of $P_0$. Eq. \eqref{eq.A2.prop.3} applies as long as $P_{min,k} < P_{T\epsilon} = C^{-1}((1-\epsilon)C(P_T))$, after which \eqref{eq.A2.prop.2} applies with $P_{T\epsilon}$ in place of $P_0$. Note that $P_{0\epsilon} < P_0$,  $P_{T\epsilon} < P_T$, and $P_{0\epsilon} \rightarrow P_0$, $P_{T\epsilon} \rightarrow P_T$ as $\epsilon \rightarrow 0$, so that similar accuracy and performance is expected for sufficiently small but non-zero $\epsilon$.

%============================================================================
\section{Dual problems}
\label{sec.dual}
Motivated by the energy efficiency issues (green communications, battery life etc.), one is lead to consider the following problem dual of \eqref{eq.secrecy}, which is to minimize \textit{globally} the total Tx power subject to the secrecy and interference power constraints:
\bal%\notag
\label{eq.P3}
&\mathrm{(P3)}\qquad \min_{(P,\bR)} P\ \ \mbox{s.t.}\ \ (P,\bR)\in S_3
\eal
where the feasible set $S_3$ is
\bal\notag
\label{eq.S3}
S_3 = \{(P,\bR):\ C(\bR)&\ge C_0,\ \bR\ge 0,\ tr(\bR)\le P,\\
 &tr(\bW_{3j}\bR)\le P_{Ij}\}
\eal
and $C_0$ is the target secrecy rate and $C(\bR)\ge C_0$ is the secrecy constraint. Note that this problem is not convex in general, since $C(\bR)$ is not concave (unless the channel is degraded), and, hence, powerful tools of convex optimization cannot be used to solve it numerically (i.e. to find a global optimum).

In addition to this problem, since an optimal covariance $\bR^*$ of the max problem in \eqref{eq.secrecy} is not necessarily unique (see e.g. Example 2 in \cite{Loyka-18}), a new problem emerges: among all optimal covariances $\bR^*$, find one with the least trace (i.e. the minimum Tx power):
\bal%\notag
\label{eq.P4}
&\mathrm{(P4)}\qquad \min_{\bR\in S_1^*} tr(\bR)
\eal
where $S_1^*$ is the set of all optimal covariances $\bR^*$ of the max problem in \eqref{eq.secrecy}. Since an explicit characterization of this set is not known in the general case (it is not even known whether this set is convex), standard optimization tools (including convex optimization) seem to be inapplicable, making this problem difficult for a direct attack.

The following Proposition shows that these problems have identical solutions and that Algorithm 2 solves both of them.

\begin{prop}
\label{prop.dual}
Let $P_3$ and $P_4$ be the optimal values of (P3) and (P4), and let $C_0$ be the optimal value of (P1), i.e. $C_0=C(P_T)$. Then,
\bal\notag
\label{eq.P3=P4}
&P_3=P_4=\min\{P_T,P_0\},\\
&\bR^*_3=\bR^*_4,\ C(\bR^*_3)=C(\bR^*_4)=C(P_T)
\eal
where $\bR^*_3, \bR^*_4$ are optimal covariances of (P3) and (P4). If $P_T\ge P_0$, then $P_3=P_4=P_0$ and Algorithm 2 also solves (P3) and (P4). If $P_T< P_0$, then $\bR^*_3=\bR^*_4=\bR^*=\bR'$ and $P_3=P_4=P_T$ (Algorithm 1 is sufficient, Algorithm 2 is not necessary in this case).
\end{prop}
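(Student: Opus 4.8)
The plan is to reduce both (P3) and (P4) to statements about the single-variable capacity function $C(P)$ of Proposition \ref{prop.Properties}, via the identity
\bal\notag
C(P) = \max_{\bR\ge 0,\ tr(\bR)\le P,\ tr(\bW_{3j}\bR)\le P_{Ij}\ \forall j} C(\bR),
\eal
which is just Theorem \ref{thm.C} (the max form of $C$) with the TPC power set to $P$. With $C_0 = C(P_T)$ fixed, the secrecy constraint $C(\bR)\ge C_0$ in (P3) is feasible at power level $P$ if and only if $C(P)\ge C(P_T)$, so that $P_3 = \min\{P: C(P)\ge C(P_T)\}$. First I would evaluate this minimum using the monotonicity in Proposition \ref{prop.Properties}: since $C(P)$ is non-decreasing and strictly increasing for $P<P_0$ while $C(P)=C(P_0)$ for $P\ge P_0$, the inequality $C(P)\ge C(P_T)$ holds exactly for $P\ge P_T$ when $P_T<P_0$ and for $P\ge P_0$ when $P_T\ge P_0$; in both cases $P_3 = \min\{P_T,P_0\} = P_{T,min}$. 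Moreover, at $P=P_3$ the maximizer of the displayed problem is feasible for (P3) and achieves $C(\bR^*_3)=C(P_3)=C(P_T)$, so the infimum is attained and the secrecy constraint is met with equality.

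Next I would handle (P4). The heart of the argument is a trace lower bound: any $\bR$ satisfying the IPCs with $C(\bR)=C(P_T)$ must have $tr(\bR)\ge P_{T,min}$. Indeed, setting $P=tr(\bR)$, feasibility gives $C(\bR)\le C(P)$, hence $C(P)\ge C(P_T)$, and the monotonicity of Proposition \ref{prop.Properties} then forces $P\ge P_{T,min}$ exactly as above. Since every element of $S_1^*$ satisfies $C(\bR^*)=C(P_T)$ and $tr(\bR^*)\le P_T$, this gives $P_4\ge P_{T,min}$. For tightness I would exhibit a covariance of trace exactly $P_{T,min}$: the maximizer $\bR_{P_{T,min}}$ of the displayed problem at power $P_{T,min}$ satisfies $tr(\bR_{P_{T,min}})\le P_{T,min}$ and $C(\bR_{P_{T,min}})=C(P_{T,min})=C(P_T)$, while the trace bound gives $tr(\bR_{P_{T,min}})\ge P_{T,min}$; hence $tr=P_{T,min}$ and, since $P_{T,min}\le P_T$, this covariance lies in $S_1^*$. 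Therefore $P_4=P_{T,min}=P_3$, and a common minimizer $\bR^*_3=\bR^*_4$ with $C(\cdot)=C(P_T)$ exists (up to the non-uniqueness of $\bR^*$ already noted). Note that this device sidesteps the need to decide whether the TPC is active exactly at $P_0$.

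Finally I would split on the two regimes to identify the algorithmic solver. When $P_T<P_0$, Proposition \ref{prop.muPT} gives $\mu(P_T)>0$, i.e. the TPC is active, so every optimal covariance of the max problem has $tr(\bR^*)=P_T$; this is the non-singular case where $\bR^*=\bR'$, so $P_3=P_4=P_T$ and Algorithm 1 alone returns the optimizer. When $P_T\ge P_0$, I would invoke the convergence analysis of Algorithm 2 already established above: its bisection drives $P_{min,k}\to\min\{P_T,P_0\}=P_0$ from below, so Line 7 evaluates $\bR'$ under an active TPC (since $\mu(P)>0$ for $P<P_0$ by Proposition \ref{prop.muPT}), giving $C(\bR')=C(P_{min,k})\to C(P_0)=C(P_T)$ with $tr(\bR')\to P_0$. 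Thus Algorithm 2 returns, to any desired accuracy, a covariance of trace $P_0=P_3=P_4$ achieving the target rate, i.e. it solves both (P3) and (P4).

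The main obstacle is the trace lower bound in the singular regime: because the TPC is inactive there, one cannot appeal to complementary slackness directly, and the bound must instead be extracted from the strict monotonicity of $C(P)$ below $P_0$ (Proposition \ref{prop.Properties}) together with the active-TPC limiting covariances produced by Algorithm 2's bisection (Proposition \ref{prop.muPT}). The remaining steps are then essentially bookkeeping over the two regimes.
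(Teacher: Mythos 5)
Your proposal is correct, but it takes a genuinely different route from the paper. The paper never reduces (P3) to the scalar function $C(P)$; instead it runs a direct exchange argument between the two problems: since $(P_T,\bR^*)\in S_3$, any (P3)-optimum satisfies $tr(\bR_3^*)\le P_3\le P_T$ and $C(\bR_3^*)\ge C(P_T)$, which forces $\bR_3^*\in S_{\bR}$ and hence $C(\bR_3^*)=C(P_T)$, so $\bR_3^*\in S_1^*$ and $P_3\ge P_4$; conversely $(P_4,\bR_4^*)\in S_3$ gives $P_3\le P_4$. This yields $P_3=P_4$ and, in the same stroke, that each problem's optimizer is optimal for the other (the coincidence of the optimal sets), before the value $\min\{P_T,P_0\}$ is ever identified; the latter then follows from the definition of $P_0$ and Propositions \ref{prop.Properties} and \ref{prop.muPT} exactly as in your regime split. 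You instead compute each value independently: $P_3=\min\{P: C(P)\ge C_0\}$ via the feasibility-at-power-$P$ equivalence, and $P_4\ge P_{T,\min}$ via the trace lower bound extracted from strict monotonicity below $P_0$, with tightness from the maximizer at power $P_{T,\min}$. What your route buys is generality and explicitness: the characterization $P_3=\min\{P:C(P)\ge C_0\}$ works for any target $C_0\le C(P_0)$, not just $C_0=C(P_T)$, and your trace bound correctly sidesteps complementary slackness in the singular regime (where the TPC multiplier vanishes), which is precisely where a naive duality argument would fail. What the paper's route buys is brevity and a slightly stronger conclusion as stated: the mutual-feasibility argument directly shows that \emph{every} (P3)-optimal covariance is (P4)-optimal and vice versa, whereas you only exhibit a common minimizer --- though the set equality follows from your own bounds in one more line (any $\bR_3^*$ has $tr(\bR_3^*)\le P_3=P_{T,\min}\le P_T$ and $C(\bR_3^*)=C(P_T)$, hence lies in $S_1^*$ with minimal trace, and conversely). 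One cosmetic simplification: in the $P_T<P_0$ regime you do not need Proposition \ref{prop.muPT} to conclude $tr(\bR^*)=P_T$; strict monotonicity alone gives it, since $tr(\bR^*)=P'<P_T$ would imply $C(P_T)=C(\bR^*)\le C(P')<C(P_T)$. Both proofs handle the algorithmic claims identically, via Algorithm 2's bisection driving $P_{\min,k}\to P_0$ under an active TPC.
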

\begin{proof}
First, we show that $P_3=P_4$ if $C_0=C(P_T)$. Indeed, it follows from \eqref{eq.S3} that $(P_T,\bR^*)\in S_3$. Hence,
\bal
P_3\le P_T,\ tr(\bR_3^*)\le P_3\le P_T,\ C(\bR_3^*)\ge C(P_T)
\eal
which implies that $\bR_3^* \in S_{\bR}$ so that $C(\bR_3^*)\le C(P_T)$. Therefore, $C(\bR_3^*)= C(P_T)$, i.e $\bR_3^*$ is also optimal for the max problem in \eqref{eq.secrecy}: $\bR_3^*\in S_1^*$. This implies $tr(\bR_3^*)=P_3 \ge P_4$. To show the opposite inequality, note that
\bal
C(\bR^*_4)=C(P_T)=C_0,\ tr(\bR^*_4)=P_4
\eal
and hence $(P_4,\bR^*_4) \in S_3$ so that $P_3\le P_4$. Therefore, $P_3=P_4$, as required. This also implies that $\bR^*_4$ is optimal for (P3) (since $tr(\bR_4^*)=P_3$) and that $\bR_3^*$ is optimal for (P4) (since $tr(\bR_3^*)=P_4$), i.e. $\bR_3^*=\bR^*_4$ and hence problems (P3) and (P4) have identical optimal values and optimal points (covariances). This establishes \eqref{eq.P3=P4}\footnote{We caution the reader that this does not imply that $\bR^*=\bR^*_3$ since $\bR^*$ is not necessarily unique and there may exist one with $tr(\bR^*)> P_3=P_4$.}.

To establish $P_3=P_4=P_0$ if $P_T\ge P_0$, observe that $C(P) < C(P_0)=C(P_T)=C_0$ for any $P<P_0$ (from \eqref{eq.P0}). Therefore, $P_3=P_4 \ge P_0$. Since $C(P_0)=C(P_T)=C(P_3)=C(P_4)$, it follows that $P_3=P_4=P_0$ and hence Algorithm 2 also solves (P3) and (P4).

If $P_T< P_0$ (which includes $P_0=\infty$), it follows from Proposition \ref{prop.muPT} that the TPC is active and, from Proposition \ref{prop.Properties}, that $C(P_T)$ is strictly increasing, which implies $P_3=P_4=P_T$ and $\bR^*_3=\bR^*_4=\bR^*=\bR'$, where the last equality implies that Algorithm 1 also solves (P3) and (P4).

\end{proof}

%============================================================================
\section{Numerical Experiments}
\label{sec.examples}

To validate the algorithms and demonstrate their performance, extensive numerical experiments have been carried out. We consider below some representative cases with 2 cooperative Evs and 2 PRs below, for both deterministic and randomly-generated channels.

\textbf{Example 1}: Fig. \ref{fig.r.step} illustrates the convergence of Algorithm 1 for the channel in \eqref{eq.NE.H}, i.e. the residual's Euclidian norm  $|\br(\bz_k)|$ versus the number $k$ of Newton steps for various values of $t$. Channel matrices $\bH_{1}$,  $\bH_{21}$,  $\bH_{22}$, $\bH_{31}$, $\bH_{32}$ are set as follows:
\bal\notag
\label{eq.NE.H}
&\bH_{1} =
\begin{bmatrix}
	0.32	& 0.66\\
	1.24	& 0.58
\end{bmatrix}\\ \notag
&\bH_{21} =
\begin{bmatrix}
	-0.58	& -1.15\\
	-0.37	& -1.07
\end{bmatrix},\
\bH_{22} =
\begin{bmatrix}
	0.17	& 0.73\\
	-0.07	& -0.54
\end{bmatrix}\\
&\bH_{31} =
\begin{bmatrix}
	1.47	& 0.32\\
	-1.57	& 0.01
\end{bmatrix}, \quad
\bH_{32} =
\begin{bmatrix}
	-0.83	& 0.38\\
	1.16	& -0.86
\end{bmatrix}
\eal
so that $\bW_{31}$ and $\bW_{32}$ are full rank (and hence the singularity is ruled out so that $\bR'=\bR^*$ in this case); the corresponding eigenvalues of $\bW_{1}-\bW_{2}$ are $(-2.53,1.16)$, i.e. the channel is non-degraded and "hard" for optimization (since the negative eigenmode is dominant). For all considered values of $t$, it takes only about 8 to 23 Newton steps to reach the machine precision level (around $10^{-12}$; recall that a globally-optimal point corresponds to $|\br|=0$). Also note the presence of two convergence phases: linear and quadratic. After the quadratic ("water-fall") phase is reached, the convergence is very fast. In general, higher values of $t$, which provide smaller gap to the capacity, require more steps to achieve the same precision, even though moderately so.

\begin{figure}[t]%[htbp]
\centerline{\includegraphics[width=3.3in]{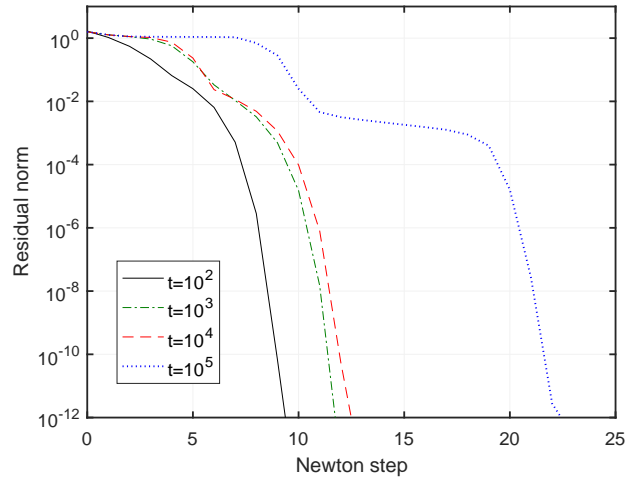}}
\caption{Convergence of the Newton method for different values of $t$; $P_T=5$ dB, $P_{I1}=P_{I2}=2$ dB, $\alpha=0.3, \beta=0.5$, channel matrices are as in \eqref{eq.NE.H}.}
\label{fig.r.step}
\end{figure}

\begin{figure}[t]%[htbp]
\centerline{\includegraphics[width=3.2in]{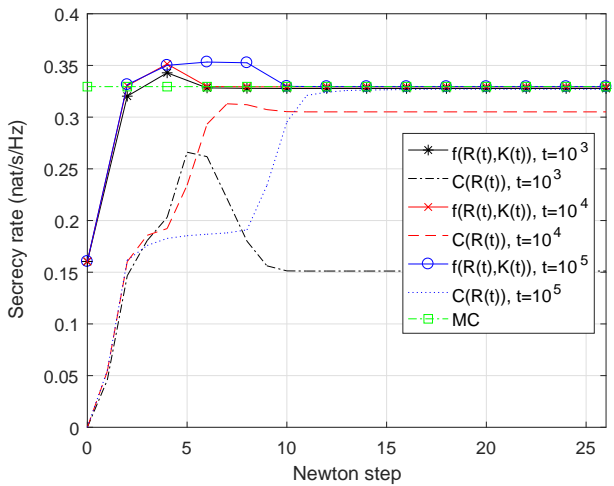}}
\caption{Achieved secrecy rate $C(\bR(t))$ and the upper bound $f(\bR(t),\bK(t))$ for the same setting as in Fig. \ref{fig.r.step}, via Algorithm 1 (for given $t$) and stochastic Monte-Carlo (MC) search.}
\label{fig.C.f.step}
\end{figure}

Fig. \ref{fig.C.f.step} shows the achieved secrecy rate $C(\bR(t))$ and its upper bound $f(\bR(t),\bK(t))$ under the same setting as in Fig. \ref{fig.r.step}. While the attained secrecy rate and the upper bound do converge to the secrecy capacity, the convergence is significantly non-monotonic here, unlike that in Fig. \ref{fig.r.step}, where the residual is decreasing monotonically. It takes more steps for  $C(\bR(t))$ to converge, as compared to $f(\bR(t),\bK(t))$, since the former is much more sensitive to $\bR$ than the latter (due to the strong negative eigenmode of $\bW_{1}-\bW_{2}$). While $t=10^3$ is sufficient to evaluate accurately the capacity via $f(\bR(t),\bK(t))$, it takes $t=10^5$ to get the same accuracy via $C(\bR(t))$ so we conclude that, in addition to being convex-concave in the right way, $f(\bR(t),\bK(t))$ is more robust (less sensitive) than $C(\bR)$: while using $t=10^3$ entails no visible loss in precision for the capacity estimate via $f(\bR(t),\bK(t))$, it induces a significant loss (of about 50\%) in attained secrecy rate $C(\bR(t))$.

For properly selected $t$, it takes a moderate number of 10 to 15 Newton steps for the algorithm to converge in terms of achieved secrecy rates. In general, the numerical complexity of Algorithm 1 follows that of the standard barrier method \cite{Boyd-04}: the number of Newton steps for each value of $t$ scales as
\bal
\label{eq.Nst}
O\left(\sqrt{n_v}\ln\frac{1}{\epsilon}\right)
\eal
while the number of flops for each Newton step scales as $O(n_v^3)$ so that the overall number of flops scales as $O(n_v^3\sqrt{n_v}\ln\epsilon^{-1})$, i.e. polynomially in $n_v$ and logarithmically in $\epsilon$, where $n_v$ is the number of independent variables of the max-min problem:
\bal
n_v =\frac{m(m + 1)}{2} + n_1\sum_j n_{2j}
\eal
This should be contrasted with the complexity of the non-convex max problem in \eqref{eq.secrecy}: if the global optimum is found using a generic non-convex optimization algorithm, the complexity scales as
\bal
O\left(\left(1/\epsilon\right)^{n_v'}\right),\ n_v'=m(m+1)/2,
\eal
i.e. exponentially large in $n_v'$, see e.g. \cite{Vavasis-95}\cite{Nesterov-04}, -- a stark contrast to \eqref{eq.Nst}, which scales sub-polynomially in $n_v$.

We further remark that while higher values of $t$ result in higher precision for the achieved secrecy rate $C(\bR(t))$, they also require more steps to reach the same precision level $\epsilon$, even though this number is still moderate.

Algorithm 1 was further validated by comparing its achieved secrecy rates with those attained by extensive stochastic Monte-Carlo (MC) search (where a large number, e.g. $10^5$, of covariance matrices were randomly generated within the feasible set and the best one was selected as an optimal covariance). As Fig. \ref{fig.C.f.step} shows, these two methods agree well with each other.

\textbf{Example 2}: To further validate Algorithm 1, its performance was evaluated on 200 randomly-generated channels (from i.i.d. $N(0,1)$ distribution for each entry of each channel) with different numbers of antennas. No significant difference with Fig. \ref{fig.r.step} and \ref{fig.C.f.step} was found. The number of Newton steps to reach the same precision is somewhat larger for larger number of antennas but only moderately so, in agreement with \eqref{eq.Nst}. Fig. \ref{fig.R.rand} shows the averaged secrecy rates found via Algorithm 1 (i.e. $C(\bR')$ and $f(\bR',\bK')$) as well as via extensive MC search, for different numbers of antennas at each node, with 2 Evs and 2 PRs present at each setting. Clearly, the results of Algorithm 1 and MC search agree well with each other, thus validating Algorithm 1.

\begin{figure}[t]%[htbp]
\centerline{\includegraphics[width=3.3in]{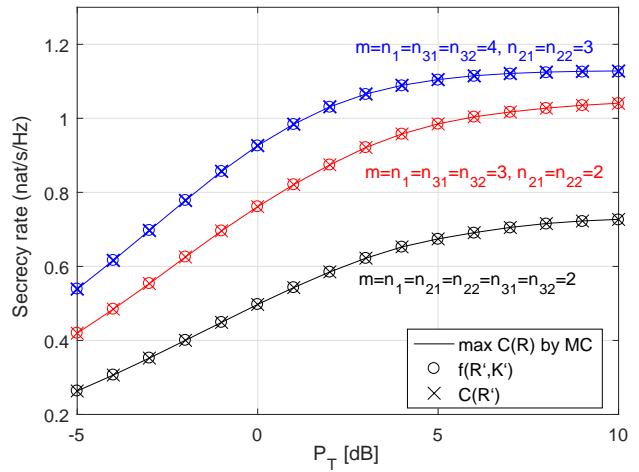}}
\caption{Achieved secrecy rates $C(\bR')$ and $f(\bR',\bK')$ via Algorithm 1 and extensive MC search, averaged over 200 randomly-generated channels with 2 Evs and 2 PRs; $P_{I1}=P_{I2}= 2$ dB, $\epsilon=10^{-8},\ \delta=10^{-3}$, $\alpha=0.3,\ \beta=0.5,\ \eta=5,\ t_{min}=10^2,\ t_{max}=10^5$.}
\label{fig.R.rand}
\end{figure}

\textbf{Example 3}: In this example, we demonstrate that while
\bal
\underset{\bR\in S_{\bR}}{\max}C(\bR) = \underset{\bR\in S_{\bR}}{\max}\ \underset{\bK\in S_{\bK}}{\min}f(\bR,\bK)
\eal
as Theorem 1 indicates, $\bR'$, which is a maximizer of $f(\bR,\bK)$, may not be a maximizer of $C(\bR)$, i.e. it is not necessarily an optimal covariance $\bR^*$ attaining the secrecy capacity, $\bR^* \neq \bR'$, as discussed above. In fact, the difference can be quite significant. Let the channel matrices be as follows: $\bH_1$, $\bH_{21}$, $\bH_{22}$ as in \eqref{eq.NE.H}, and set
\bal
 \label{eq.NE.H2}
\bH_{31} =
 \begin{bmatrix}
 	-0.23	& -0.16\\
 	 -0.05	&  -0.71
 \end{bmatrix},\
\bH_{32} =
 \begin{bmatrix}
 	-0.47	& 0.09
 \end{bmatrix}
 \eal
so that $\bW_{31}$ is full rank and $\bW_{32}$ is rank deficient and hence the singular case is possible. It is clear from Fig. \ref{fig.R*=not.R'} that $\bR'$ is not a maximizer of $C(\bR)$ in the singular case, i.e. for $P_T$ larger than about 11 dB (which corresponds to inactive TPC and the active IPC channel matrix being singular) where $C(\bR')$ drops down significantly while $f(\bR',\bK')$ returned by Algorithm 1 is always a good estimate of the capacity. Hence, $\bR'$ cannot be used for optimal signaling as an optimal covariance under inactive TPC (this would entail about 70\% loss in achieved secrecy rate). Note also that Algorithm 2 is able to find an optimal covariance even in the singular case and its attained secrecy rate agrees well with that of extensive Monte-Carlo search and that via $f(\bR',\bK')$.

To demonstrate that sub-optimal algorithms (based on 1st order Taylor expansion of the non-convex part of the max problem) may get trapped at a local optimum that is far away from the global one, the secrecy rate maximization algorithm in \cite{Cumanan-14}, which is based on this strategy, was implemented using the popular convex optimization toolbox CVX \cite{CVX-18} in each iteration. As Fig. \ref{fig.R*=not.R'} shows, it does get trapped in a local optimum (close to 0), far away from the global one, as expected from the discussion in the Introduction. Hence, Taylor-based sub-optimal algorithms should be used with caution (or avoided at all) when the original problem is not convex (as in this example, where the channel is not degraded and negative eigenmode dominates, making it "hard" for optimization).

\begin{figure}[t]%[htbp]
	\centerline{\includegraphics[width=3.3in]{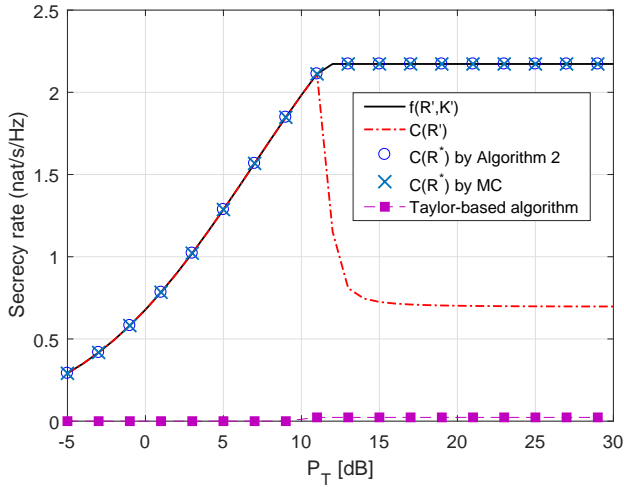}}
	\caption{Secrecy rate as a function of the TPC power $P_T$ for $\bH_1$, $\bH_{21}$, $\bH_{22}$ as in \eqref{eq.NE.H} while $\bH_{31}$ and $\bH_{32}$ are as in  \eqref{eq.NE.H2}, $P_{I1}=P_{I2}=5$ dB. Note that $\bR'$ is not always a maximizer of $C(\bR)$ but the secrecy rate attained by Algorithm 2 always agrees well with that of MC and $f(\bR',\bK')$. Taylor-based algorithm gets trapped at a local optimum, far away from the global one, since the channel is not degraded.}
	\label{fig.R*=not.R'}
\end{figure}

\begin{figure}[t]%[htbp]
	\centerline{\includegraphics[width=3.3in]{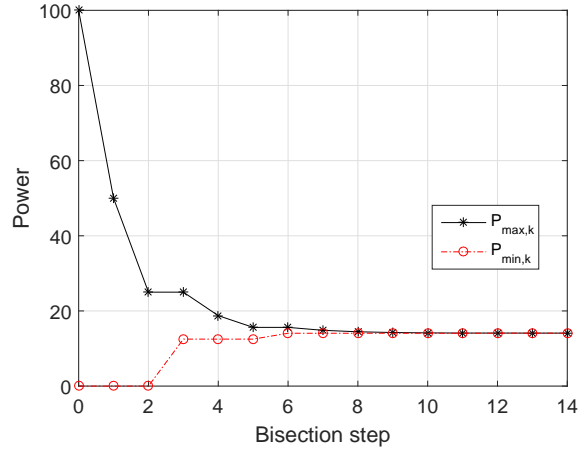}}
	\caption{Convergence  of Algorithm 2 for the setting of Fig. \ref{fig.R*=not.R'}; $P_T=100$ (20 dB), $P_{I1} = P_{I2}= 3.16$ (5 dB); $\epsilon=\delta= 10^{-4}$; estimated $P_0=14.09$. Note that the difference between $P_{max,k}$ and $P_{min,k}$ decreases sharply with $k$ and both sequences monotonically converge to $P_0$, as expected from the analysis.}
	\label{Fig.A2.cov}
\end{figure}

\begin{figure}[t]%[htbp]
	\centerline{\includegraphics[width=3.3in]{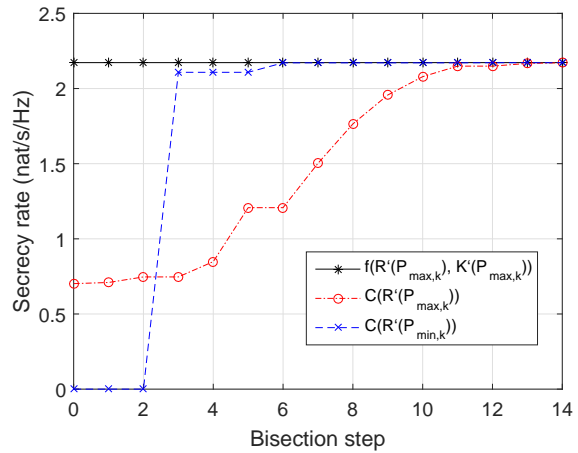}}
	\caption{Convergence  of secrecy rates from Algorithm 2 for the setting of Fig. \ref{Fig.A2.cov}. Note that both $C(\bR'(P_{min,k}))$ and $C(\bR'(P_{max,k}))$ converge to the capacity, where the former converges faster while $f(\bR'(P_{max,k}), \bK'(P_{max,k})) =C$ for all $k$, as expected. }
	\label{Fig.A2.C.cov}
\end{figure}

Fig. \ref{Fig.A2.cov} demonstrates the convergence of Algorithm 2 for the setting of Fig. \ref{fig.R*=not.R'} with $P_T=100$ (20 dB), $P_{I1} = P_{I2}= 3.16$ (5 dB); $\epsilon=\delta= 10^{-4}$. Note that the convergence is exponentially fast so that only a few bisection steps is needed and that both sequences converge monotonically to $P_0=14.09$. Fig. \ref{Fig.A2.C.cov} shows the convergence of attained secrecy rates $C(\bR'(P_{min,k}))$ and $C(\bR'(P_{max,k}))$ for the same setting. Note that $C(\bR'(P_{max,k}))$ is initially significantly below the capacity while $f(\bR'(P_{max,k}), \bK'(P_{max,k})) =C =2.17$ for all $k$, confirming our earlier observation that $\bR'$ is not necessarily a maximizer of $C(\bR)$ while $f(\bR',\bK')=C$ always holds. Algorithm 2 overcomes this problem by properly reducing the TPC power $P_T$ to make the TPC active. In this process, both $C(\bR'(P_{min,k}))$ and $C(\bR'(P_{max,k}))$ converge to the capacity, where the former converges faster (it takes only 6 iterations) while $f(\bR'(P_{max,k}), \bK'(P_{max,k})) =C$ for all $k$, as expected from the analysis. As a by-product, the minimum Tx power $P_0=14.09$ needed to achieve the secrecy capacity $C =2.17$ is also determined by Algorithm 2. Note that this power is significantly smaller than the TPC power $P_T=100$, hence allowing significant power savings.

\textbf{Example 4}: To further validate Algorithm 2, we compare its performance with that of MC search for 200 randomly-generated channels for which the singularity condition of Section \ref{sec.algorithm2} is satisfied, with $m=n_1=4, n_{21}=n_{22}=3, n_{31}= n_{32}= 2$. Fig. \ref{fig.R*.A2.rand} shows the respective averaged secrecy rates. Note that the results of Algorithm 2 and MC search agree well with each other while using $\bR'$ from Algorithm 1 alone does not always maximize $C(\bR)$, in agreement with Section \ref{sec.algorithm2} and Example 3; the gap is significant at high $P_T$ (e.g. 10 dB). This shows a dramatic impact of IPC on algorithm's performance. This figure, along with Fig. \ref{fig.R*=not.R'}, also shows that Algorithm 1 alone is not sufficient and Algorithm 2 is really needed to find an optimal covariance matrix in the singular case. On the contrary, Algorithm 1 is sufficient to find the capacity in the general case via $f(\bR',\bK')$.

\begin{figure}[t]%[htbp]
	\centerline{\includegraphics[width=3.3in]{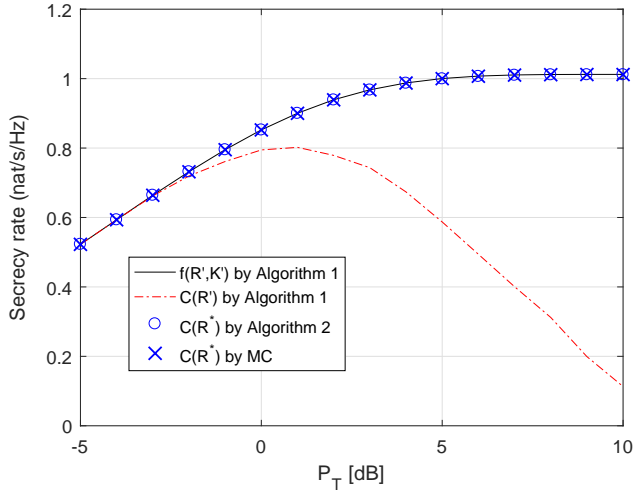}}
	\caption{Secrecy rate vs. the TPC power $P_T$, averaged over 200 random realizations of singular channels; $P_{I1}=P_{I2}=-2$ dB, $\epsilon=10^{-4},\ \delta=10^{-3}$, $\alpha=0.3,\ \beta=0.5,\ \eta=5,\ t_{min}=10^2,\ t_{max}=10^5$. Note that $\bR'$ is not always a maximizer of $C(\bR)$ but the secrecy rate attained by Algorithm 2 always agrees well with that of MC search and $f(\bR',\bK')$.}
	\label{fig.R*.A2.rand}
\end{figure}

%\vspace*{-1\baselineskip}
%================================================================================
\section{Conclusion}

Optimal secure signaling over multi-user MIMO wiretap channels has been studied under interference constraints (e.g. as in CR) in this paper. While several algorithms have been presented in the literature for secrecy rate maximization in this setting, they are either limited to the MISO setting (single-antenna receiver) or suffer from the lack of provable convergence to a global optimum and may get trapped in a local optimum far away from the global one.

In this paper, we presented two algorithms  for \textit{global} secrecy rate maximization under interference constraints in the full multi-user MIMO setting with provable (global) convergence to the secrecy capacity. These algorithms avoid using approximation-based approach (as in all known algorithms) and hence avoid the danger of being trapped in a local optimum (or stationary point) far away from the global one. This is accomplished by using the recent secrecy capacity characterization in the interference-constrained setting as a max-min problem, where both problems are convex.
As a by-product, the minimum transmit power needed to achieve the secrecy capacity is also determined via Algorithm 2. This algorithm also solves the dual problem of globally minimizing the total Tx power subject to the secrecy rate constraint, in addition to the IPCs. Numerical experiments validate the convergence analysis and demonstrate fast convergence for both algorithms as well as their superiority to the sub-optimal algorithms known in the literature.

Finally, we remark that these algorithms can also be used to evaluate the secrecy capacity and globally-optimal signaling strategy under per-antenna power constraint, in addition to or instead of the TPC. This can be accomplished by setting some $\bW_{3j}$ to be diagonal matrices with 0-1 entries.

%\vspace*{-1\baselineskip}
%======================================================================
%\section{Acknowledgement}

%The authors are thankful to Dr. V.I. Mordachev and E.V. Sinkevich (BSUIR, Minsk, Belarus) for their support and encouragement.

%\vspace*{-.5\baselineskip}
%\newpage
%========================================================================
\section{Appendix}

%\vspace*{-.5\baselineskip}
\subsection{Gradients and Hessians}
While gradients and Hessians can be computed numerically via finite differences, this results in lower efficiency in addition to a possible loss in precision (due to numerical "noise"), which affects negatively convergence of the algorithm. Hence, we provide below analytical expressions for gradients and Hessians obtained, after some manipulations, using the standard rules of matrix differential calculus (see e.g. \cite{Harville-97}\cite{Magnus}):
\bal
\notag
\nabla_{\bx}f_t &= \bD_{m}^T vec(\bZ_{1}-\bZ_{2}+t^{-1}\bR^{-1})\\ \notag &-\frac{1}{t}g_1(\bR) veh(\bI) -\frac{1}{t}\sum_{j} g_{3j}(\bR)\bw_j,\\
\nabla_{\by}f_t &= \widetilde{\bD}_{n}^T vec((\bK+\bQ)^{-1}-(1+t^{-1})\bK^{-1})
\eal
where $\bQ=\bH\bR\bH^T$, $\bw_j=veh(2\bW_{3j}-diag(\bW_{3j}))$,
\bal\notag
&g_1(\bR)=(P_{T}-tr(\bR))^{-1},\\ \notag
&g_{3j}(\bR)= (P_{Ij} -tr(\bW_{3j}\bR))^{-1},\\ \notag
&\bZ_{1}=(\bI+\bH^T\bK^{-1}\bH\bR)^{-1}\bH^T\bK^{-1}\bH,\\
&\bZ_{2}=(\bI+\bW_{2}\bR)^{-1}\bW_{2},
\eal
and $\bD_{m}$ is the $m^2 \times m(m+1)/2$  duplication matrix defined from $vec(\bR)=\bD_{m}veh(\bR)$ \cite{Magnus}, $\widetilde{\bD}_{n}$ is $(n_{1}+n_{2})^2 \times n_{1}n_{2}$ reduced duplication matrix defined from $d\bk = \widetilde{\bD}_n d\tilde{\bk}$, where
\bal\notag
d\bk = vec(d\bK),\ d\tilde{\bk}=vec(d\bN), \
d\bK = \left(
\begin{array}{cc}
	\bo & d\bN^T \\
	d\bN & \bo \\
\end{array}
\right)
\eal
Likewise, the Hessians are
\bal
\label{eq.Hessian.xx}
\notag
\nabla_{\bx\bx}^{2}f_t  &=-\bD_{m}^T(\bZ_{1}\otimes \bZ_{1}-\bZ_{2}\otimes \bZ_{2}+t^{-1}\bR^{-1}\otimes \bR^{-1})\bD_{m}\\ \notag
 &-\frac{1}{t} g_1(\bR)^{2} veh(\bI) veh(\bI)^T -\frac{1}{t}\sum_{j}g_{3j}(\bR)^{2}\bw_j \bw_j^T\\ \notag
\nabla_{\bx\by}^{2}f_t &= -\bD_{m}^T(\bH^T(\bK+\bQ)^{-1}\otimes \bH^T(\bK+\bQ)^{-1})\widetilde{\bD}_{n}\\ \notag
\nabla_{\by\by}^{2}f_t &= \widetilde{\bD}_{n}^T (-(\bK+\bQ)^{-1}\otimes (\bK+\bQ)^{-1}\\
&+(1+t^{-1})\bK^{-1}\otimes \bK^{-1})\widetilde{\bD}_{n}
\eal
%While detailed derivations are omitted here due to the page limit, they are available in \cite{Dong-18b}.

%\vspace*{-1\baselineskip}
%============================================================================
\subsection{Proof of Proposition \ref{prop.non-singular.KKT}}

The proof is based on the following two Lemmas.
\begin{lemma}
	\label{lemma.partial.hessian}
	Partial Hessian $\nabla^2_{\bx\bx} f_t,\ \nabla^2_{\by\by} f_t$ are non-singular for each $t>0$, $\bR \in S'_{\bR},\bK \in S'_{\bK}$.
\end{lemma}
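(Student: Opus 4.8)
The plan is to prove more than stated, namely that $\nabla^2_{\bx\bx}f_t \prec \bo$ and $\nabla^2_{\by\by}f_t \succ \bo$ for every $t>0$, $\bR\in S'_{\bR}$, $\bK\in S'_{\bK}$; definiteness immediately yields non-singularity. The idea is to evaluate each partial Hessian as a quadratic form along an arbitrary symmetric perturbation direction and to split it, by linearity of the Hessian, into the contribution of the original objective $f$ and the contribution of the barrier terms. The concavity/convexity of $f$ established earlier (concave in $\bR$, convex in $\bK$) makes the $f$-contribution sign-definite but only weakly, while the log-determinant barriers $I_1(\bR)=t^{-1}\ln|\bR|$ and $-I_4(\bK)=-t^{-1}\ln|\bK|$ supply the strict definiteness. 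Since the duplication matrices $\bD_m$ and $\widetilde{\bD}_n$ have full column rank, a strictly definite form along symmetric directions maps to a strictly definite matrix in the reduced variables $\bx=veh(\bR)$, $\by=vec(\bN)$.

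For $\nabla^2_{\bx\bx}f_t$, I would take an arbitrary symmetric $d\bR\neq\bo$, use $\bD_m\,veh(d\bR)=vec(d\bR)$ and the identity $vec(\bX)^{T}(\bM\otimes\bM)vec(\bX)=tr(\bX\bM\bX\bM)$ (valid for symmetric $\bM,\bX$; here $\bZ_1,\bZ_2$ are symmetric since each equals $\bW^{1/2}(\bI+\bW^{1/2}\bR\bW^{1/2})^{-1}\bW^{1/2}$ with $\bW=\bH^{T}\bK^{-1}\bH$ or $\bW=\bW_2$), together with $veh(\bI)^{T}veh(d\bR)=tr(d\bR)$ and $\bw_j^{T}veh(d\bR)=tr(\bW_{3j}d\bR)$, to obtain
\[
\begin{aligned}
veh(d\bR)^{T}\nabla^{2}_{\bx\bx}f_t\,veh(d\bR)
={}& -\big[\,tr(d\bR\,\bZ_1\,d\bR\,\bZ_1)-tr(d\bR\,\bZ_2\,d\bR\,\bZ_2)\,\big]-\tfrac{1}{t}\big\|\bR^{-1/2}d\bR\,\bR^{-1/2}\big\|_{F}^{2}\\
&-\tfrac{g_1^{2}}{t}\big(tr\,d\bR\big)^{2}-\tfrac{1}{t}\sum_{j}g_{3j}^{2}\big(tr(\bW_{3j}d\bR)\big)^{2}.
\end{aligned}
\]
The first bracket is the negated Hessian form of $f$ in $\bR$, hence $\ge 0$ by concavity of $f$; the last two terms are $\le 0$; and the middle term is $<0$ whenever $d\bR\neq\bo$ because $\bR\succ\bo$. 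Thus the whole expression is strictly negative, giving $\nabla^2_{\bx\bx}f_t\prec\bo$.

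For $\nabla^2_{\by\by}f_t$ the argument is the mirror image. With $d\bN\neq\bo$ and associated symmetric $d\bK=\widetilde{\bD}_n\,vec(d\bN)$, the same trace identity (now with $\bM\in\{(\bK+\bQ)^{-1},\bK^{-1}\}$, $\bQ=\bH\bR\bH^{T}$) gives
\[
\begin{aligned}
vec(d\bN)^{T}\nabla^{2}_{\by\by}f_t\,vec(d\bN)
={}& \underbrace{\big[\,tr(d\bK\,\bK^{-1}d\bK\,\bK^{-1})-tr(d\bK(\bK+\bQ)^{-1}d\bK(\bK+\bQ)^{-1})\,\big]}_{\ge 0}\\
&+\tfrac{1}{t}\big\|\bK^{-1/2}d\bK\,\bK^{-1/2}\big\|_{F}^{2}.
\end{aligned}
\]
The bracket is exactly the Hessian form of $f$ in $\bK$ and is $\ge 0$ by convexity of $f$, while the barrier term is $>0$ for $d\bK\neq\bo$; since $\widetilde{\bD}_n$ has full column rank, $d\bN\neq\bo\Rightarrow d\bK\neq\bo$, so $\nabla^2_{\by\by}f_t\succ\bo$.

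I expect the main obstacle to be bookkeeping rather than the definiteness logic: carefully reducing the Kronecker forms through $\bD_m$ and $\widetilde{\bD}_n$, confirming the symmetry of $\bZ_1,\bZ_2$, and matching the sign of each bracket to the quoted concavity/convexity of $f$. If one prefers not to invoke the convexity of $f$ in $\bK$ as a black box, the bracket can be shown nonnegative directly from $\bK+\bQ\succeq\bK\succ\bo$ via the monotonicity fact that $tr(\bX\bM_1\bX\bM_1)\le tr(\bX\bM_2\bX\bM_2)$ whenever $\bo\preceq\bM_1\preceq\bM_2$ and $\bX$ is symmetric (expand $\bM_2=\bM_1+\bD$ with $\bD\succeq\bo$ and use that the trace of a product of two positive semidefinite matrices is nonnegative); the analogous remark applies to the concavity bracket in the $\bR$-direction.
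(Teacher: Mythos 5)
Your proof is correct and in fact reaches the same strengthened conclusion as the paper, namely strict definiteness $\nabla^2_{\bx\bx}f_t \prec \bo$ and $\nabla^2_{\by\by}f_t \succ \bo$, but by a genuinely different route. The paper works in the Loewner order at the Kronecker level: it first establishes $\bZ_1 \succeq \bZ_2$ from $\bH^T\bK^{-1}\bH \succeq \bW_2$ via the resolvent form $\bZ_i=(\bR+\bW_i^{-1})^{-1}$ and inversion monotonicity (with a continuity argument when $\bW_2$ or $\bH^T\bK^{-1}\bH$ is singular), then invokes the squaring property $\bA\succeq\bB\succeq\bo \Rightarrow \bA\otimes\bA\succeq\bB\otimes\bB$, adds the strictly positive barrier blocks $t^{-1}\bR^{-1}\otimes\bR^{-1}$ and $t^{-1}\bK^{-1}\otimes\bK^{-1}$, and finishes exactly as you do via the full column rank of $\bD_m$ and $\widetilde{\bD}_n$. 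You instead scalarize: the identity $vec(\bX)^T(\bM\otimes\bM)vec(\bX)=tr(\bX\bM\bX\bM)$ splits the quadratic form into the Hessian form of $f$ itself, whose sign follows from the concavity-in-$\bR$/convexity-in-$\bK$ already stated in Section II, plus barrier contributions, of which the log-det barriers supply the strict terms $t^{-1}\|\bR^{-1/2}\,d\bR\,\bR^{-1/2}\|_F^2$ and $t^{-1}\|\bK^{-1/2}\,d\bK\,\bK^{-1/2}\|_F^2$; you also correctly keep the rank-one TPC/IPC barrier terms with their (helpful) negative sign, which the paper's final step handles only implicitly. What each approach buys: yours avoids both the Kronecker-squaring lemma and the continuity argument for singular $\bW_2$ (concavity of $f$ in $\bR$ holds regardless of rank), and it makes transparent exactly which barrier produces strictness; the paper's is more self-contained, re-deriving the convex-concave structure at the matrix level rather than citing it. One small caveat: your trace-monotonicity fallback ($tr(\bX\bM_1\bX\bM_1)\le tr(\bX\bM_2\bX\bM_2)$ for $\bo\preceq\bM_1\preceq\bM_2$, $\bX$ symmetric) is correct as sketched, but applying it to the $\bR$-bracket still requires $\bZ_2\preceq\bZ_1$, hence the inequality $\bH^T\bK^{-1}\bH\succeq\bW_2$ that the paper asserts as its opening step, so that self-contained variant is not quite free of the paper's first lemma-internal fact.
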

\begin{proof}
First, since $\bQ \geq \bo$ and $\bK>0$, then $\bK+\bQ \ge  \bK >\bo$ so that
 \bal
 (\bK+\bQ)^{-1} \leq \bK^{-1}
\eal
and, using the  properties of Kronecker products \cite{Zhang},
  \bal
   \bK^{-1}\otimes \bK^{-1} \geq (\bK+\bQ)^{-1}\otimes (\bK+\bQ)^{-1}
\eal
so that
\bal
\label{eq.L1.3}
\notag
	(1+t^{-1}) \bK^{-1}&\otimes \bK^{-1} -(\bK+\bQ)^{-1}\otimes (\bK+\bQ)^{-1}\\
	&\geq t^{-1}\bK^{-1}\otimes \bK^{-1} > \bo.
\eal
Secondly, since $\widetilde{\bD}_{n}$ is of full column rank, it follows that $ \widetilde{\bD}_{n}\by\neq \bo$ for any $\by\neq \bo$, so that
\bal\notag
 \by^T\nabla_{\by\by}^2 f_t\by &=\by^T\widetilde{\bD}^T_{n}((1+t^{-1}) \bK^{-1}\otimes \bK^{-1}\\
  &-(\bK+\bQ)^{-1}\otimes (\bK+\bQ)^{-1})\widetilde{\bD}_{n}\by>0
\eal
where the inequality is due to \eqref{eq.L1.3}, and thus $\nabla_{\by\by}^2 f_t>\bo$, as required.

To prove the non-singularity of $\nabla_{\bx\bx}^2 f_t$, first note that $\bH^T\bK^{-1}\bH \geq \bW_2$ so that $(\bH^T\bK^{-1}\bH)^{-1}\leq \bW_2^{-1}$ and hence $\bZ_1 \geq \bZ_2$ as follows:
\bal
\notag
\bZ_1&=(\bI+\bH^T\bK^{-1}\bH\bR)^{-1}\bH^T\bK^{-1}\bH\\
\notag
     &=(\bR+(\bH^T\bK^{-1}\bH)^{-1})^{-1}\\
\notag
     &\geq (\bR+\bW_2^{-1})^{-1}\\
     &=(\bI+\bW_2\bR)^{-1}\bW_2=\bZ_2.
\eal
The case of singular $\bH^T\bK^{-1}\bH$ and $\bW_2$ can be considered using the standard continuity argument (see e.g. \cite{Zhang}). Using this equality and the property of Kronecker products, it follows that $\bZ_1\otimes\bZ_1 \geq \bZ_2\otimes\bZ_2 $, and, since $t^{-1}\bR^{-1}\otimes\bR^{-1}>\bo$,
	\bal
	\bZ_1\otimes\bZ_1 - \bZ_2\otimes\bZ_2 + t^{-1}\bR^{-1}\otimes\bR^{-1} > \bo
	\eal
Since $\bD_{m}$ is of full column rank \cite{Magnus}, $\bD_{m}\by\neq \bo$  for any $\by\neq \bo$, so that
\bal\notag
\by^T\bD^T_{m}(\bZ_1\otimes\bZ_1 - \bZ_2\otimes\bZ_2 + t^{-1}\bR^{-1}\otimes\bR^{-1})\bD_{m}\by > 0
\eal
and hence
\bal
\bD^T_{m}(\bZ_1\otimes\bZ_1 - \bZ_2\otimes\bZ_2 + t^{-1}\bR^{-1}\otimes\bR^{-1})\bD_{m} > \bo
\eal
Applying all these inequalities to \eqref{eq.Hessian.xx}, one obtains $\nabla_{\bx\bx}^{2}f_t<\bo$, as desired.
\end{proof}

\begin{lemma}
\label{lemma.hessian}
The Hessian
	\bal
	D\br =
	\left[
	\begin{array}{cc}
		-\bT_{11} & \bT_{12}\\
		\bT_{21} & \bT_{22}\\
	\end{array}
	\right]
	\eal
is non-singular if partial Hessians $\bT_{11}, \bT_{22}$ are non-singular, i.e. if $\bT_{11}, \bT_{22} > \bo $, where $\bT_{11}=-\nabla^2_{xx} f_t,\ \bT_{12}=\nabla^2_{xy} f_t,\ \bT_{21}=\bT_{12}^T=\nabla^2_{yx} f_t,\ \bT_{22}=\nabla^2_{yy} f_t$.
\end{lemma}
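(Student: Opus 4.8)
The plan is to prove non-singularity by showing that $D\br$ has only the trivial null space, which for a square matrix is equivalent to invertibility. First I would suppose that some vector $[\ba^T,\bb^T]^T$ (partitioned conformally with the two variable blocks $\bx,\by$) lies in the null space, so that the two block equations read
$$-\bT_{11}\ba + \bT_{12}\bb = \bo, \qquad \bT_{21}\ba + \bT_{22}\bb = \bo,$$
and the goal becomes to force $\ba=\bo$ and $\bb=\bo$.

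Next I would extract scalar identities by left-multiplying the first block equation by $\ba^T$ and the second by $\bb^T$, giving $-\ba^T\bT_{11}\ba + \ba^T\bT_{12}\bb = 0$ and $\bb^T\bT_{21}\ba + \bb^T\bT_{22}\bb = 0$. The crucial observation is that each cross term is a scalar, so that $\bb^T\bT_{21}\ba = (\bb^T\bT_{21}\ba)^T = \ba^T\bT_{21}^T\bb = \ba^T\bT_{12}\bb$, where the last step uses the symmetry $\bT_{21}=\bT_{12}^T$ inherited from the Hessian. The two mixed terms therefore coincide, and subtracting the first scalar identity from the second cancels them exactly, leaving
$$\ba^T\bT_{11}\ba + \bb^T\bT_{22}\bb = 0.$$

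Finally, I would invoke Lemma \ref{lemma.partial.hessian}, which gives $\nabla^2_{\bx\bx}f_t<\bo$ and $\nabla^2_{\by\by}f_t>\bo$, i.e. $\bT_{11}>\bo$ and $\bT_{22}>\bo$. Both quadratic forms above are then non-negative, and their sum vanishes only if each vanishes, which (by positive definiteness) forces $\ba=\bo$ and $\bb=\bo$. Hence the null space is trivial and $D\br$ is non-singular. I do not anticipate any real obstacle: the entire argument rests on the sign pattern (negative-definite top-left block, positive-definite bottom-right block) together with the symmetry $\bT_{21}=\bT_{12}^T$ that makes the cross terms cancel. An equivalent route is a Schur-complement argument, since $\bT_{22}>\bo$ is invertible and $D\br$ is non-singular iff its Schur complement $-\bT_{11}-\bT_{12}\bT_{22}^{-1}\bT_{12}^T$ is, the latter being negative definite because $\bT_{11}>\bo$ and $\bT_{12}\bT_{22}^{-1}\bT_{12}^T\ge\bo$. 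The quadratic-form version avoids inverting $\bT_{22}$ and is cleaner, so I would present that one.
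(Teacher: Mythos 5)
Your argument is correct, and it takes a genuinely different (though closely related) route from the paper. The paper proves the lemma by computing the determinant of the partitioned matrix via the Schur complement of the top-left block,
\bal\notag
|D\br| = |-\bT_{11}|\,|\bT_{22}+\bT_{21}\bT_{11}^{-1}\bT_{12}|
       = \pm\,|\bT_{11}|\,|\bT_{22}+\bT_{12}^T\bT_{11}^{-1}\bT_{12}|,
\eal
and then observes that both factors are nonzero because $\bT_{11},\bT_{22}>\bo$ (Lemma \ref{lemma.partial.hessian}) and $\bT_{12}^T\bT_{11}^{-1}\bT_{12}\geq\bo$ --- essentially the second route you sketched at the end, except the paper takes the complement of $-\bT_{11}$ while you proposed the complement of $\bT_{22}$, an immaterial difference. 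Your primary argument --- assuming $[\ba^T,\bb^T]^T$ is in the null space, extracting the two scalar identities, and using $\ba^T\bT_{12}\bb=\bb^T\bT_{21}\ba$ (from $\bT_{21}=\bT_{12}^T$) to cancel the cross terms and obtain $\ba^T\bT_{11}\ba+\bb^T\bT_{22}\bb=0$ --- is more elementary: it avoids inverting any block, needs no determinant identity for partitioned matrices, and makes transparent that non-singularity comes precisely from the saddle-point sign pattern (negative-definite $\bx\bx$ block, positive-definite $\by\by$ block) that a convex-concave $f_t$ guarantees. What the paper's version buys in exchange is a one-line computation given the standard partitioned-determinant formula, with the two definiteness facts plugged in directly. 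Both proofs rest on exactly the same two ingredients: the block definiteness from Lemma \ref{lemma.partial.hessian} and the symmetry $\bT_{21}=\bT_{12}^T$ of the Hessian, so your proposal is a valid drop-in replacement.
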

\begin{proof}
Note that $D\br$ is a square $n_T\times n_T$ matrix, where $n_T = m(1+m)/2+n_1\sum_{i=1}^{N}n_{2i}$, so that proving its non-singularity is equivalent to proving that $|D\br|\neq 0$. Using an expression for the determinant of a partitioned matrix \cite{Zhang}, we have
\bal
\notag
|D\br|&= |-\bT_{11}| |\bT_{22}+\bT_{21}\bT_{11}^{-1}\bT_{12}|\\
\label{eq.det.T}
&=(-1)^{n_T} |\bT_{11}| |\bT_{22}+\bT_{12}^T\bT_{11}^{-1}\bT_{12}|
\eal
where \eqref{eq.det.T} follows since $\bT_{12}^T=\bT_{21}$. According to Lemma \ref{lemma.partial.hessian}, $\bT_{11}, \bT_{22}>0$, so that $|\bT_{11}|>0$ and $\bT_{22}+\bT_{12}^T\bT_{11}^{-1}\bT_{12}> \bo$ and hence $|\bT_{22}+\bT_{12}^T\bT_{11}^{-1}\bT_{12}|>0$, from which  $|D\br|\neq  0$ follows.
\end{proof}

Combining Lemma \ref{lemma.partial.hessian} and Lemma \ref{lemma.hessian}, we conclude that the Hessian $D\br$ is non-singular at each step of the Newton method.

%\vspace*{-1\baselineskip}
%============================================================================
\subsection{Proof of Proposition \ref{prop.Properties}}

Let $\sS_R(P_T)$ be the feasible set for a given TPC power $P_T$. 1st part of Property 1 follows from the fact that if $\bR\in \sS_R(P_T)$, then $\bR\in \sS_R(P_1)$ for any $P_1 \ge P_T$, i.e. if $\bR$ is feasible for TPC power $P_T$, then it is also feasible for any higher TPC power $P_1$. 2nd part follows from Property 4.

The concavity can be proved by contradiction. Assume that $C(P_T)$ is not concave, i.e. there exist powers $P_1,\ P_2$ and $0< \theta <1$ such that
\bal
\label{eq.PT.sharing}
C(\theta P_1 + (1-\theta) P_2) < \theta C(P_1) + (1-\theta) C(P_2)
\eal
Now, consider power/time sharing between power levels $P_1$ and $P_2$, i.e. transmitting under TPC power $P_1$ for $\theta$ fraction of time and under TPC power $P_2$ for $1-\theta$ fraction of time, so that the average Tx power does not exceed $P_a = \theta P_1 + (1-\theta) P_2$.
Let $\bR_k$ be an optimal Tx covariance under the TPC power $P_k$, $k=1,2$. Note that $\bR_k \in \sS_R(P_k)$ implies $\theta \bR_1  + (1-\theta) \bR_2 \in \sS_R(P_a)$, i.e. this power/time sharing is feasible under the TPC power $P_a$  and it achieves the secrecy rate equal to the right hand side of \eqref{eq.PT.sharing}, so that, from \eqref{eq.PT.sharing},  $C(P_a = \theta P_1 + (1-\theta) P_2)$ is not the capacity - a contradiction. We remark that using the standard optimization-based proof, as in e.g. \cite{Boyd-04} (Exercise 5.32), is not possible here since $C(\bR)$ is not concave (unless the channel is degraded). Continuity of $C(P_T)$ follows from its concavity \cite{Rockafellar-70}.

To prove Property 3, observe that
\bal
C(P)\le C(P_2)\le C(P_1)
\eal
for any $P\le P_2\le P_1$, since $C(P)$ is non-decreasing, and hence
\bal
C(P)= C(P_2)= C(P_1)
\eal
if $C(P) = C(P_1)$. It follows that $C(P)'_+=0$ and thus $C(P_1)'_+=0$ for any $P_1 \ge P$, since $C(P)'_+$ is non-increasing (since $C(P)$ is concave)    so that, from Corollary 24.2.1 in \cite{Rockafellar-70},
\bal
C(P_1) = C(P) + \int_{P}^{P_1} C(p)'_+dp = C(P)
\eal
for any $P_1 \ge P$. 2nd statement follows from 1st one.

To prove Property 4, use Corollary 24.2.1 in \cite{Rockafellar-70} again,
\bal
C(P_T) = C(P_1) + \int_{P_1}^{P_T} C(p)'_-dp > C(P_1)
\eal
from which it follows that $C(p)'_->0$ for some $P_1 \le p \le P_T$ and hence $C(p)'_-\ge C(P_1)'_- > 0$ for any $p \le P_1$ so that
\bal
C(P_1) = C(P_2) + \int_{P_2}^{P_1} C(p)'_-dp > C(P_2)
\eal
2nd part of this property follows from the fact that $C(P)'_-$ is non-increasing (since $C(P)$ is concave).

%\vspace*{-0.5\baselineskip}
%============================================================================
\subsection{Proof of Proposition \ref{prop.muPT}}

The proof is by contradiction as follows. First, note that $C(P_T)$ is an optimal value of the max-min problem (P1) in \eqref{eq.secrecy}. Consider now the same problem but without the TPC (under the IPC only):
\bal
\label{eq.P3}
\textrm{(P3)}: \  \max_{\bR\in \sS_R(\infty)} \min_{\bK \in \sS_K}  f(\bR,\bK)
\eal
for which the respective KKT conditions are
\bal
\label{eq.gradient.2}
&\nabla_{\bR} f(\bR,\bK) + \bM_1 - \sum_{j=1}^{K}\mu_j\bW_{3j}= \bo, \\
\label{eq.complementary slackness.2}
&\bM_1\bR = \bo,\  \mu_{j} (tr(\bW_{3j}\bR)-P_{Ij})=0,\\
\label{eq.primal.dual.2}
&tr(\bW_{3j}\bR) \le P_{Ij},\ \bR, \bM_1 \ge \bo, \ \mu_{j} \ge 0, \\
\label{eq.gradient.K.2}
&\nabla_{\bK} f(\bR,\bK) - \bM_2 + \bLam = \bo, \\
\label{eq.slackness.K.2}
&\bM_2\bK = \bo,\ \bK, \bM_2 \ge \bo
\eal
where \eqref{eq.gradient.2}-\eqref{eq.primal.dual.2} and \eqref{eq.gradient.K.2}-\eqref{eq.slackness.K.2} are the KKT conditions for maximization over $\bR$ and minimization over $\bK$, respectively; $\bM_{1(2)} \ge 0$ are (matrix) Lagrange multipliers responsible for $\bR\ge 0$ and $\bK \ge 0$ constraints; $\bLam$ is Lagrange multiplier responsible for the equality constraint in \eqref{eq.set.K};  $\mu_j \ge 0$ is Lagrange multiplier responsible for $j$-th IPC; \eqref{eq.gradient.2} and \eqref{eq.gradient.K.2} are the stationarity conditions with respect to $\bR$ and $\bK$; the equalities in \eqref{eq.complementary slackness.2} and \eqref{eq.slackness.K.2} are the complementary slackness conditions while  the inequalities in \eqref{eq.primal.dual.2} and \eqref{eq.slackness.K.2} are the primal and dual feasibility constraints.

On the other hand, the KKT conditions of the original problem (P1) are
\bal
\label{eq.gradient}
&\nabla_{\bR} f(\bR,\bK) + \bM_1 - \mu\bI\ - \sum_{j}\mu_j\bW_{3j}= \bo, \\
\label{eq.complementary slackness}
&\bM_1\bR = \bo,\ \mu (tr(\bR)-P_T)=0,\ \mu_{j} (tr(\bW_{3j}\bR)-P_{Ij})=0,\\
\label{eq.primal.dual}
&tr(\bR) \le P_T,\ tr(\bW_{3j}\bR) \le P_{Ij},\ \bR, \bM_1 \ge \bo, \ \mu, \mu_{j} \ge 0\\
\label{eq.gradient.K}
&\nabla_{\bK} f(\bR,\bK) - \bM_2 + \bLam = \bo, \\
\label{eq.slackness.K}
&\bM_2\bK = \bo,\ \bK, \bM_2 \ge \bo
\eal
where $\mu\ge 0$ is Lagrange multiplier responsible for the TPC.

Now note that any solution of the KKT conditions of (P1) in \eqref{eq.gradient}-\eqref{eq.slackness.K} with $\mu(P_T)=0$ also solves \eqref{eq.gradient.2}-\eqref{eq.slackness.K.2} and hence (P3) (since the KKT conditions are sufficient for optimality in both cases). Thus, if $\mu(P_T)=0$, then
\bal
C(P_T)=C(\infty) \ge C(P_0) > C(P_T)
\eal
i.e. a contradiction, where the last inequality is from Property 1 in Proposition \ref{prop.Properties}. Hence, we conclude that $\mu(P_T)>0$ for any $P_T < P_0$. Once can further show that $\mu(P_T)=0$ for any $P_T > P_0$: observe that $C(P_T)'_-=C(P_T)'_+=C(P_T)'=0$ if $P_T > P_0$ (since $C(P_T)=C(P_0)$ for any $P_T > P_0$ from the definition of $P_0$) and hence $\mu(P_T)=C(P_T)'=0$.

%\vspace*{-1\baselineskip}
%======================================================================

\end{document}